\newcommand{\comment}[1]{} %define a new command that effectively does nothing with the input
\definecolor{darkgreen}{RGB}{0, 150, 0} % Custom darker green
\newtheorem{theorem}{Theorem}[section]
\newtheorem{Lemma}{Lemma}[section]
\def\P{\mathbf{P}}
\def\oP{\overline{\mathbf{P}}}
\def\E{\mathbb{E}}
\def\am{\mathbf{M}}
\def\aom{\overline{\mathbf{M}}}
\def\ax{\mathbf{X}}
\def\aox{\overline{\mathbf{X}}}
\def\al{\lambda}
\def\olam{\overline{\lambda}}
\def\Lam{\mathbf{\Lambda}}
\def\oLam{\overline{\mathbf{\Lambda}}}
\def\Z{\mathbf{Z}}
\newcommand{\eref}[1]{~(\ref{#1})}%{eq.~(\ref{#1})}
\newcommand{\N}[2]{{\mathbf{N}}^{#1}_{#2}}
\newcommand{\nt}[2]{{\mathbf{N}}^{(#1,T)}_{#2}}
\newcommand{\mt}[2]{{\mathbf{M}}^{(#1,T)}_{#2}}
\newcommand{\M}[2]{{\mathbf{M}}^{#1}_{#2}}
\newcommand{\xt}[2]{{\mathbf{X}}^{(#1,T)}_{#2}}
\newcommand{\lam}[2]{{\lambda}^{#1}_{#2}}
\newcommand{\lt}[2]{{\lambda}^{(#1,T)}_{#2}}
\newcommand{\blt}[2]{{\mathbf{\Lambda}}^{(#1,T)}_{#2}}
\def\coloneqq{:=}
\def\P{\mathbf{P}}
\def\E{\mathbb{E}}
\def\am{\mathbf{M}}
\def\ax{\mathbf{X}}
\def\aox{\overline{\mathbf{X}}}
\def\al{\lambda}
\def\olam{\overline{\lambda}}
\def\Lam{\mathbf{\Lambda}}
\def\oLam{\overline{\mathbf{\Lambda}}}
\def\Z{\mathbf{Z}}
\def\1{\mathbf{1}}
\def\half{\frac{1}{2}}
\def\f{\frac}
\def\intot{\int_0^t}
\begin{document}
\title{\bf Asymmetric super-Heston-rough volatility model with Zumbach effect as scaling limit of quadratic Hawkes processes}

\author{
Priyanka Chudasama\thanks{\href{mailto:priyankac@iisc.ac.in}{priyankac@iisc.ac.in}} \hspace{1pt} and
Srikanth Krishnan Iyer\thanks{Corresponding author: \href{mailto:skiyer@iisc.ac.in}{skiyer@iisc.ac.in}}\\
\small Department of Mathematics, Indian Institute of Science, Bangalore, India
}

\date{} % This removes the date

\maketitle

\begin{center}
    {\bf Abstract}
\end{center}
Hawkes processes were first introduced to obtain microscopic models for the rough volatility observed in asset prices. Scaling limits of such processes leads to the rough-Heston model that describes the macroscopic behavior. Blanc et al. (2017) show that Time-reversal asymmetry (TRA) or the Zumbach effect can be modeled using Quadratic Hawkes (QHawkes) processes. Dandapani et al. (2021) obtain a super-rough-Heston model as scaling limit of QHawkes processes in the case where the impact of buying and selling actions are symmetric. To model asymmetry in buying and selling actions, we propose a bivariate QHawkes process and derive a super-rough-Heston model as scaling limits for the price process in the stable and near-unstable regimes that preserves TRA. A new feature of the limiting process in the near-unstable regime is that the two driving Brownian motions exhibit a stochastic covariation that depends on the spot volatility.

\vspace{0.2in}
% AMS 2000 subject classifications: 60F05, 60F17, 60G55, 62P05. \\
Key words and phrases: Quadratic Hawkes processes, limit theorems, microstructure modeling, super rough Heston model, Zumbach effect, time reversal asymmetry.

\section{Introduction}
\label{s1}

``The hunt for a “perfect” statistical model of financial markets is
still going on,'' wrote the authors in \cite{finan}. The goal is to derive models for macroscopic behavior of financial asset prices that arise as
scaling limits of stochastic models that incorporate micro-structural features  observed in time series of price processes (see \cite{bacry,finan,omar, aditi, rough_fractional, unstable}).
Properties of interest include volatility clustering, leverage effect, fat (power-law) tails of the return distribution, etc. Volatility clustering is the tendency in the price process where large changes are usually followed by large changes of either sign while small changes usually follow small changes. The leverage effect refers to the negative correlation between return and volatility. That is, the volatility drops as asset prices increase and vice-versa. Other stylized facts observed in price movements are that the markets are highly endogenous, there is an absence of statistical arbitrage and price movements are statistically asymmetric.

In 2013, Bacry et al. \cite{bacry} proposed using Hawkes processes to study the macroscopic behavior of price processes. A Hawkes process $(N_t: t \geq 0)$ is a self-exciting point process that can be fully determined using the intensity (event arrival rate) function,
$$\lambda_t \coloneqq \mu+ \sum\limits_{t_i < t} \phi(t-t_i) = \mu+ \int_0^t \phi(t-s) \, dN_s.$$
Here, $\mu$ is the baseline intensity and $\phi$ is a nonnegative function that satisfies $\norm{\phi}_1 \coloneqq \int_0^\infty \phi(s)ds < 1$. The later condition guarantees stability, that is, the existence of a stationary distribution for the intensity process. $0 < t_1 < t_2 < \cdots$ denote the jump times of the process $N(t)$. Each arrival results in a jump in the intensity and the impact of a jump on the intensity decays according to the kernel $\phi$. This feedback from the process to its intensity is referred to as the self-exciting property and leads to clustering of arrivals.

Price processes can be modeled using Hawkes process in two ways. One approach,  as introduced in \cite{bacry}, is to use a bivariate Hawkes process $(\N{1}{t},\N{2}{t}:t \geq 0)$ representing upward and downward movements of the price process. The price process is then given by $\P_t \coloneqq \N{1}{t}-\N{2}{t}$. 

Alternately, one can use a univariate Hawkes process $N(t)$ to model jump events in  the price process where the size of jumps in price takes values $+1$ or $-1$ with arbitrary probability independent of everything else. Thus,
\begin{equation}
    P_t = \sum_{i=1}^{N(t)} \xi_i,
\label{e0}
\end{equation}
where $\xi_i \stackrel{iid}{\sim} \pm 1$ with probability $\mathbb{P}(\xi_1 = 1) = p$. 

In the first approach, the intensity function of the bivariate process satisfies
\[ \left( \begin{array}{c} \lambda_t^1 \\ \lambda_t^2 \end{array} \right) =
\left( \begin{array}{c} \mu^1 \\ \mu^2 \end{array} \right) + \intot \Phi(t-s) \left( \begin{array}{c} dN_s^1 \\ dN_s^2 \end{array} \right).\]
where
\begin{equation}
    \Phi = \left( \begin{array}{lr} \phi_1 & c \phi_2 \\ \phi_2 & \phi_1 + (c - 1) \phi_2 \end{array} \right).
    \label{e0a}
\end{equation}
In this case, the condition under which the intensity process has a stationary distribution, also known as the stability condition, is given by $\rho(K) < 1$ where $\rho(K)$ is the spectral norm of the matrix
\[ K = \int_0^{\infty} \Phi(t) \, dt. \]
Scaling limits of such processes are Gaussian processes with fixed volatility. The precise scaling procedure will be described later. To motivate the model we wish to study, we discuss the following important observations made in \cite{omar}. The only way to obtain stochastic volatility in the above set-up is to consider a scaling regime where $\rho(K)$ approaches one at an appropriate rate. This is referred to as the near unstable regime. Secondly, to ensure absence of arbitrage one must have $\mu_1 = \mu_2$ and the row sums of $\Phi$ must be identical (which explains the particular choice of the matrix in \eref{e0a}). Finally, negative correlations between price returns and volatility increments occur in this model only if $c > 1$. Under these conditions one obtains the Heston model in the scaling limit \cite{omar}, that is, the price process satisfies,
\begin{equation}
    dP_t = \f{1}{1-(\parallel \phi_1 \parallel_1 - \parallel \phi_2 \parallel_1)} \sqrt{\f{2}{1+c}} \sqrt{V_t} \, dW_t,
    \label{e1}
\end{equation}
where the process $V$ follows the SDE
\begin{equation}
    dV_t = \kappa(v_0 - V_t) dt + \eta \sqrt{V_t} \, dB_t.
    \label{e2}
\end{equation}
The covariation between the standard Brownian motions $W, B$ is given by
\begin{equation*}
    d\langle W,B \rangle_t = \f{1 - c}{\sqrt{2(1+c^2)}} \; dt,
    %\label{e3}
\end{equation*}
which is negative only if $c > 1$.

Yet another stylized fact observed in the markets is the presence of long range correlations. Such a phenomenon could arise, for instance, due to large meta orders getting split by algorithms to avoid impact of orders on market prices. In \cite{omar} it is shown that long range dependence can be obtained in the above set-up of a bivariate Hawkes process by imposing the following tail-decay condition on the largest eigenvalue of the excitation kernel $\Phi$, that is,
\[ \int_t^{\infty} (\phi_1+ c \phi_2)(s) \, ds \sim C t^{-\alpha}, \qquad \alpha \in \left(\half, 1\right). \]
The resulting scaling limit is the rough-Heston model
where the price process satisfies\eref{e1} but with \eref{e2} replaced by
\begin{equation*}
    V_t = V_0 + \intot (t-s)^{\alpha - 1} \left(\kappa(1+c - V_s) \, ds + \eta \sqrt{V_s} \, dB_s \right),
    %\label{e4}
\end{equation*}
for suitable constants $\kappa, \eta.$ The roughness here is a consequence of the singular factor in the integrand.

Another important but less discussed aspect of the price process, as observed by Zumbach in \cite{time_reversal}, is the statistical symmetry at high frequency on the macro scale when past and future are interchanged. This is referred to as time-reversal asymmetry (TRA) or the Zumbach effect. This asymmetry manifests itself in two ways. One is that the past returns negatively affect future volatility but not other way around. Secondly, past large scale realized volatility is more correlated with future small scale realized volatility than vice-versa.

In 2015, Blanc et al. \cite{blanc} proposed using a Quadratic Hawkes (QHawkes) process to model TRA. It turns out that a scaling limit of this model exhibits rough volatility behavior even in the stable regime. A QHawkes process is a self-exciting process similar to the Hawkes process with an additional quadratic term in its intensity function. Here one follows the second approach to modeling a price process using a univariate Hawkes process as described in \eref{e0} with $p=\half$. (we will return to this point later). The QHawkes counting process $(\N{}{t})_{T > 0}$ has intensity
\begin{equation}
    \lambda_t = \mu + \int_0^t \phi(t-s) \, d\N{}{s} + \Z_t^2, \text{ with } \Z_t \coloneqq \int_0^tk(t-s) \, d\P_s.
    \label{e5}
\end{equation}
The quadratic term in the intensity process leads to TRA. Increases in the intensity of the arrival of future events occur whenever price movements of the same sign are seen in succession. In other words, price momentum leads to an increase in volatility. The QHawkes price process exhibits volatility clustering, leverage effect, and fat tails of return distribution on the microscale. It was shown in \cite{aditi} that the scaling limit of such processes yields the super-Heston-rough process. The limiting price process $P_t$ satisfies $dP_t = \sqrt{V_t}\;dW_t$, where $W$ is a standard Brownian motion. The volatility process takes the form
\begin{equation*}
    V_t = V_0 + \frac{1}{\alpha}\int_0^t(t-s)^{\alpha-1} \lambda (\theta - V_s) \, ds + \frac{\lambda \nu}{\alpha}(\int_0^t(t-s)^{\alpha-1} \sqrt{V_s} \,dB_s)^{\beta},
    %\label{e6}
\end{equation*}
where $B_t$ is a standard Brownian motion independent of $W$, $\beta = 2$ and $\alpha \in(\frac{1}{2},1)$. Note that $\beta =1$ yields the rough-Heston model. A super-Heston model would have powers higher than the classical square-root term in the spot volatility evolution, represented here by the fact that $\beta > 1$.  

The choice of $p = \half$ yields a martingale inside the quadratic term in \eref{e5} which plays an important role in deriving the scaling limit.
However, this leads to the restriction that the impact of buying and selling actions is symmetric. This is the main issue we address in this paper. We shall keep notations closely aligned with that in \cite{aditi} for ease of comparison.

To model asymmetry in the impact of buy and sell actions, we go back to the first modeling approach of using a bivariate process, but do so with QHawkes processes instead. The price process $\P_t \coloneqq \N{1}{t}-\N{2}{t}$ where $\N{1}{t}$ and $\N{2}{t}$ are two QHawkes processes with respective intensities given by
\begin{align}
    \lam{1}{t}\coloneqq\mu_1+\int_0^t\phi_1(t-s) \,d\N{1}{s}+\int_0^t\phi_2(t-s) \,d\N{2}{s} + \left(\int_0^tk_1(t-s) \,d\M{1}{s}-\int_0^tk_2(t-s) \,d\M{2}{s}\right)^2 \nonumber\\
     \lam{2}{t}\coloneqq\mu_2+\int_0^t\phi_2(t-s) \,d\N{1}{s} +\int_0^t\phi_1(t-s) \,d\N{2}{s}+ \left(\int_0^tk_2(t-s) \,d\M{1}{s}-\int_0^tk_1(t-s) \,d\M{2}{s}\right)^2,
     \label{e7}
\end{align}
where $\M{1}{t} = \N{1}{t} - \int_0^t \lam{1}{s} \,ds$ and $\M{2}{t} = \N{2}{t} - \int_0^t \lam{2}{s} \,ds$ are martingales. If $k_1 > k_2$ then a larger-than-expected upward movement in prices will result in an increase in $\lam{1}{}$ and likewise for the downward movement.

We show that the scaling limit of the above price process is a super-Heston-rough volatility model that preserves the Zumbach effect at the macroscale and the asymmetry in price movements at microscopic level. In comparison with \eref{e0a}, note that we are setting $c=1$ in the linear term. We show that it is possible to obtain a negative correlation between returns and volatility increments without having to take $c > 1$ in the nearly-unstable regime. In fact we show that the cross variation between the two Brownian motions is itself stochastic. Depending on the parameters and the price trends, this could take both positive and negative values, thus allowing for a broader range of possibilities.
%While allowing for this additional parameter would be desirable, it complicates the analysis.

\section{Stability of the Price Process}
\label{s2}

The structure specified in \eref{e7} lends itself to a nice simplification of the intensity function of the price process. The intensity function $\al_{t}\coloneqq\lam{1}{t}-\lam{2}{t}$ of the price process satisfies
\begin{align*}%\label{e8}
     \al_{t} = \mu+\int_0^t\phi(t-s) \, d\P_s + \left[\int_0^t(k_1+k_2)(t-s) \, d\am_{s}\right]\left[\int_0^t(k_1-k_2)(t-s) \, d\am_{s}\right],
\end{align*}
where $\mu\coloneqq\mu_1-\mu_2$ and $\phi\coloneqq\phi_1-\phi_2$. $\am_{t} \coloneqq \M{1}{t} -\M{2}{t}$ is a martingale as the counting processes $\N{1}{t}$ and $\N{2}{t}$ do not have simultaneous jumps. Since we shall assume that $k_1 > k_2$, the model can be simplified by taking $k_1 + k_2 \coloneqq \sqrt{\alpha_1}k$ and $k_1-k_2 \coloneqq \sqrt{\alpha_2}k$ where $\alpha_1, \alpha_2>0$. Setting $\alpha\coloneqq\sqrt{\alpha_1\alpha_2}$, $\al_{t}$ can be rewritten as
\begin{align}\label{e9}
     \al_{t}=\mu+\int_0^t\phi(t-s) \, d\P_s + \alpha\left(\int_0^tk(t-s) \, d\am_{s}\right)^2.
\end{align}
Thus the price process $\P_t$ has intensity given by \eref{e9} with associated martingale
\begin{align*}%\label{e10}
    \am_{t} = \P_t - \int_0^t \al_{s} \, ds.
     \end{align*}
We now examine conditions under which the price process is stable. This follows (see \cite{stable_hawkes, nonlinear_hawkes}) provided the mean intensity $\E(\lambda_t)$ converges. Taking expectations in \eref{e9} and using the fact that 
$$\langle\am\rangle_t = \langle M^1 \rangle_t + \langle M^2 \rangle_t = \langle N^1 \rangle_t + \langle N^2 \rangle_t = \int_0^t (\lam{1}{s}+\lam{2}{s}) \, ds,$$
we obtain
\begin{alignat*}{1} %\label{e10a}
    \E[\al_{t}] 
    & =\; \mu+\int_0^t\phi(t-s)\E[\al_{s}]ds + \alpha\E\left[ \left\langle\int_0^tk(t-s)d\am_{s}\right\rangle \right] \nonumber \\
     & = \; \mu+\int_0^t\phi(t-s)\E[\al_{s}^{}]ds + \alpha\E\left[\int_0^t(k(t-s))^2d\langle\am\rangle_s\right] \nonumber \\
    & =\; \mu + \int_0^t\phi(t-s) \E[\al_{s}^{}] \, ds +
    \alpha \E\left[\int_0^tk^2(t-s)(\lam{1}{s}+\lam{2}{s}) \, ds\right].
\end{alignat*}
Let $\olam_{t} = \lam{1}{t}+\lam{2}{t}$, $\overline{\mu}\coloneqq\mu_1+\mu_2$ and $\overline{\phi}\coloneqq\phi_1+\phi_2$. Proceeding as above we obtain
\begin{alignat}{1}\label{e11}
      \E[\olam_{t}] 
%      & = \; \overline{\mu}+\int_0^t\overline{\phi}(t-
% s)\E[\olam_{s}]ds + \frac{(\alpha_1+\alpha_2)}
%{2}\E\left[\int_0^tk^2(t-s)\olam_{s}ds\right]\\
& =\; \overline{\mu}+\int_0^t\overline{\phi}(t-s)\E[\olam_{s}] \, ds + \frac{(\alpha_1+\alpha_2)}{2}\E\left[ \left\langle\int_0^tk(t-s)d\am_{s}\right\rangle \right] \nonumber \\
     & = \; \overline{\mu}+\int_0^t\overline{\phi}(t-s)\E[\olam_{s}] \, ds + \frac{(\alpha_1+\alpha_2)}{2}\E\left[\int_0^t(k(t-s))^2d\langle\am\rangle_s\right] \nonumber \\
 & = \; \overline{\mu}+\int_0^t\left(\overline{\phi}(t-s) + \frac{(\alpha_1+\alpha_2)}{2}k^2(t-s)\right)\E[\olam_{s}] \, ds.
 \end{alignat}
Convergence of $\E[\olam_{t}]$ will follow by monotonicity provided we can show that it is bounded. This in turn will imply the convergence of $\E[\al_{t}]$ (again by boundedness and monotonicity of $\E[\lam{i}{t}]$).  From \eref{e11} it is clear that $\E[\olam_{s}] \leq \E[\olam_{t}]$ for any $s < t$. Using this on right-hand side of \eref{e11} and simplifying, we obtain
\begin{align*}%\label{e12}
    \E[\olam_{t}] \leq \frac{\overline{\mu}}{1-\left(\norm{\overline{\phi}}_1 + \frac{(\alpha_1+\alpha_2)}{2}\norm{k}_2^2\right)}.
\end{align*}
We thus obtain the following sufficient condition for stability.
\begin{align*}%\label%{e13}
    \norm{\overline{\phi}}_1 + \frac{(\alpha_1+\alpha_2)}{2}\norm{k}_2^2 < 1.
\end{align*}
This condition is also required to show existence of subsequential limits for the sequence of scaled processes which we shall describe next.

\section{Main Results}
\label{s3}

\subsection{The Stable Regime}
\label{s3a}

To obtain a description for the macroscopic behavior, we consider a sequence of processes indexed by a parameter $T>0$. The price processes $\P_t^{T} = \mathbf{N}^{(1,T)}_t-\mathbf{N}^{(2,T)}_t$, $t \in [0,T]$ has intensity 
\[ \al_{t}^T = \lt{1}{t} - \lt{2}{t},\]
where following \eref{e7} we have
\small{
\begin{alignat*}{1}
    \lt{1}{t}\coloneqq & \mu_1^T \!+\! \int_0^t\phi_1^T(t-s) \,d\nt{1}{s}+\int_0^t\phi_2^T(t-s) \,d\nt{2}{s} \!+\! \left(\int_0^t k_1^T(t-s) \,d\M{(1,T)}{s}-\int_0^tk_2^T(t-s) \,d\M{(2,T)}{s}\right)^2, \nonumber\\
     \lt{2}{t}\coloneqq & \mu_2^T \!+\! \int_0^t\phi_2^T(t-s) \,d\nt{1}{s} \!+\! \int_0^t\phi_1(^Tt-s) \,d\nt{2}{s} \!+\! \left(\int_0^t k_2^T(t-s) \,d\M{(1,T)}{s}-\int_0^tk_1^T(t-s) \,d\M{(2,T)}{s}\right)^2.
     %\label{e7new}
\end{alignat*}}
Analogous to \eref{e9} the intensity of the price process $\al_{t}^T = \lt{1}{t} - \lt{2}{t}$ can then be written as
\begin{align}\label{e14}
     \al_{t}^T=\mu+\int_0^t\phi^T(t-s) \, d\P_s^T + \alpha \left(\int_0^tk^T(t-s) \, d\am_{s}^T \right)^2,
\end{align}
where $\phi^T \coloneqq \phi_1^T -\phi_2^T$ , $k_1^T + k_2^T \coloneqq \sqrt{\alpha_1}\;k^T$, $k_1^T - k_2^T \coloneqq \sqrt{\alpha_2}\;k^T$ with $\alpha_1, \alpha_2 > 0$, and $\alpha\coloneqq\sqrt{\alpha_1\alpha_2}$. $\am_{t}^T$ is the martingale $\P_t^T - \int_0^t \al_s^T \, ds$ and we have set $\mu^T \coloneqq \mu_1^T - \mu_2^T = \mu$. 
Rescaling $t \rightarrow tT$ and $s \rightarrow sT$ in \eref{e14} we obtain
\begin{alignat*}{1}
    \al_{tT}^T &=\; \mu +  \int_0^t\phi^T(T(t-s)) \, d\P_{Ts}^{T}+\alpha\left(\int_0^tk^T(T(t-s)) \, d\am_{Ts}^T\right)^2, \; t \in [0,1],
\end{alignat*}
which can be rewritten as
\begin{alignat}{1}\label{e15}
    \al_{tT}^T &=\;\mu +  \int_0^t\phi^T(T(t-s))T\al_{Ts}^T \; ds+\int_0^t\phi^T(T(t-s))\;d\am_{Ts}^T\nonumber\\
    & \; +\; \alpha\left(\int_0^tk^T(T(t-s))\;d\am_{Ts}^T\right)^2, \; t \in [0,1].
\end{alignat}
Similarly, the intensity $\olam_{t}^T = \lt{1}{t} + \lt{2}{t}$ of the processes $\oP^T_t \coloneqq \nt{1}{t} + \nt{2}{t}$ can be written as
\begin{align*}%\label{eo14}
     \olam_{t}^T=\overline{\mu}+\int_0^t\overline{\phi}^T(t-s) \, d\oP_s^T +\frac{\alpha_1 + \alpha_2}{2} \left(\int_0^tk^T(t-s) \, d\am_{s}^T \right)^2, \; t \in [0,T],
\end{align*}
where $\overline{\phi}^T \coloneqq \phi_1^T + \phi_2^T$ and we have set $\overline{\mu}^T \coloneqq \mu_1^T + \mu_2^T = \overline{\mu}$. 
Rescaling as above we get
\begin{alignat*}{1}%\label{eo15}
    \olam_{tT}^T &=\;\overline{\mu} +  \int_0^t\overline{\phi}^T(T(t-s))T\olam_{Ts}^T \; ds+\int_0^t\overline{\phi}^T(T(t-s))\;d\aom_{Ts}^T\nonumber\\
    & \; +\; \frac{\alpha_1 + \alpha_2}{2}\left(\int_0^tk^T(T(t-s))\;d\am_{Ts}^T\right)^2, \; t \in [0,1],
\end{alignat*}
where $\aom_{t}^T$ is the martingale $\oP_t^T - \int_0^t \olam_s^T$.

It is natural to try and show convergence of the sequence of processes $\frac{1}{\sqrt{T}}\am_{Tt}^T$. The appropriate scaling assumptions for the parameters given below are then evident. We shall add a technical condition that will be required to prove tightness.

%\addtocontents{toc}{\protect\setcounter{tocdepth}{1}}
\subsubsection{Assumptions}\label{assumpm2}
%\addtocontents{toc}{\protect\setcounter{tocdepth}{2}}
%----------------------------------------------------------------------
\renewcommand{\theenumi}{\roman{enumi}}%
\begin{enumerate}
 \item The sequence of kernels satisfy 
\begin{align*}%\label{aa1}
\phi^T(\cdot) = \phi\left(\frac{\cdot}{T}\right)\frac{\beta}{T} ,\;\;\;\; \overline{\phi}^T(\cdot) = \phi\left(\frac{\cdot}{T}\right)\frac{\overline{\beta}}{T} \;\;\text{ and } \;\; k^T(\cdot) = k\left(\frac{\cdot}{T}\right)\frac{1}{\sqrt{T}}
\end{align*}
for some constants $\beta, \overline\beta$ and kernels $\phi, k$ satisfying $\abs{\beta} < \overline\beta$ and 
\begin{equation*}
    0<\overline{\beta}\norm{\phi}_1+\frac{(\alpha_1+\alpha_2)}{2}\norm{k}_2^2<1.
    %\label{aa2}
\end{equation*}

\item  There exists $\eta ,\overline{\eta} \geq 0$ such that
\begin{align*}
     \norm{\phi}_1 + 2\alpha\norm{k}_2 \left(\eta + \overline{\mu}\right) < \frac{\eta}{\eta+\abs{\mu}}, \text{ and }%\label{a11}\\
     \norm{\phi}_1 + (\alpha_1+\alpha_2)\norm{k}_2\left(\overline{\eta} + \overline{\mu}\right) < \frac{\overline{\eta}}{\overline{\eta}+\overline{\mu}}.%\label{a12}
\end{align*}
These conditions will be required to prove the uniqueness of the limiting processes (see section \ref{uniq_m1}).

\item The function
$k \in L^{2+\epsilon}$ for some $\epsilon > 0$ and for any $0 \leq t \leq \hat{t} \leq 1, $
$$ \int_0^t \abs{k(\hat{t}-s)-k(t-s)}^2 \, ds < C\abs{\hat{t}-t}^r$$
for some $r, C > 0$. Further, for some $\eta \in (0,1).$
$$ \int_0^1\abs{k(t)}^2t^{-2\eta} \, dt +\int_0^1\int^1_0\frac{\abs{k(t)-k(s)}^2}{\abs{t-s}^{1+2\eta}} \, ds \, dt < \infty.$$
\end{enumerate}

\subsubsection{Scaling Limit in the Stable Regime}
\label{s3ab}

%Let $\overline{\P}_{tT}^{T} = \mathbf{N}^{1,T}_t  + \mathbf{N}^{2,T}_t$ and $\Lam_{t}^{T}=\int_0^t(\lt{1}{Ts} - \lt{2}{Ts})ds = \int_0^t \lambda_{Ts}^Tds$. 
Define the sequence of scaled processes $\ax_{t}^{T}=\frac{\P_{tT}^{T}}{T},$ $\am_t^{*T} = \frac{1}{\sqrt{T}}\am_{Tt}^T$, $t \in [0,1],$  $T>0.$
Consider the Mittag-Leffler function
\begin{equation}\label{mittag}
f^{\tilde{\alpha}, \sigma}(s)  \coloneqq \sum_{n=0}^{\infty}\frac{s^n}{\Gamma(\tilde{\alpha} n + \sigma)}.
\end{equation}
%
%where $\tilde{\alpha}, \sigma$ are complex parameters and $\Gamma(x)$ is the Gamma function. 
%If $\tilde{\alpha}, \sigma$ are real and positive,  $f^{\tilde{\alpha}, \sigma}$  becomes an entire function.
%Moreover, 
For $\tilde{\alpha} \in (\frac{1}{2}, 1)$ and $\sigma >0$, $f^{\tilde{\alpha}, \sigma}$ satisfies assumption \ref{assumpm2} (iii) for any $\epsilon \in  (o, \frac{2\tilde{\alpha} -1}{1 - \tilde{\alpha}})$, $\eta \in (0, \tilde{\alpha} - \frac{1}{2})$ and $r = 2\tilde{\alpha} -1$.
%
%\begin{align}\label{r11}
%        \ax_{t}^{T}=\frac{\P_{tT}^{T}}{T}, \qquad 
%    \overline{\ax}_{t}^{T}=\frac{\overline{\P}_{tT}^{T}}
%{T},\qquad \P_t^{*T}=\frac{\P_{tT}^{T}}{\sqrt{T}} \qquad
%    \text{ and } \qquad \am_{t}^{*T}\coloneqq \frac{1}
%{\sqrt{T}}\am_{Tt}^T.
%    \end{align}
%
We are now ready to state the first main result of this paper.

\begin{theorem}\label{theorem2}
% (i) and \ref{assumpm2} (ii)
Under the assumptions \ref{assumpm2}, the sequence $\left(\ax^T, \am^{*T} \right)_{T > 0}$ 
is C-tight for the Skorokhod topology on $[0,1]$, and as $T \to \infty$ converges to a process $(X,M^*)$ with the following properties.

1.\hspace{5pt}$X_t = \int_0^t V_s \, ds$, where $V$ is the \textbf{unique solution} of the equation 
\begin{equation*}
    V_t = \mu + \beta\int_0^t \phi(t-s)V_s \, ds + \alpha (Z^*_t)^2, \text{ with } Z_t^*= \int_0^t k(t-s)\, dM^*_s. %\label{lv1}
\end{equation*}
2.\hspace{5pt} There exists a Brownian motion $B$ such that 
\begin{equation*}
    M^*_t =\int_0^t \sqrt{\overline{V}_s}\;dB_s,%\label{lv3}
\end{equation*}
where $\overline{V}$ is 
%   the derivative of $\overline{X}$ and 
a solution of the equation 
\begin{equation*}
     \overline{V}_t = \overline{\mu} + \overline{\beta}\int_0^t \phi(t-s)\overline{V}_s \, ds + \frac{(\alpha_1+\alpha_2)}{2}(Z^*_t)^2.%\label{lv2}
\end{equation*}
3. For any $\epsilon > 0$, if $k = f^{\tilde{\alpha}, \sigma }$ with $\tilde{\alpha} \in (\frac{1}{2},1)$ and $\sigma >0$, then $V, \overline{V}$ have almost surely $\tilde{\alpha} -\frac{1}{2}-\epsilon$ H\H{o}lder regularity.
\end{theorem}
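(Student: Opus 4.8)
The plan is to run the standard three-step programme for scaling limits of self-exciting point processes---uniform moment bounds, C-tightness together with identification of the subsequential limits, and uniqueness of the limit to promote subsequential to full convergence---and then to read off the Hölder regularity of part 3 from the driving stochastic convolution. First I would record the uniform control that underlies everything. Rescaling the renewal equation \eref{e11} under assumption \ref{assumpm2}(i) replaces the effective drift kernel by $\overline{\beta}\phi(t-s)+\tfrac{\alpha_1+\alpha_2}{2}k^2(t-s)$, whose $L^1$ norm equals $\overline{\beta}\|\phi\|_1+\tfrac{\alpha_1+\alpha_2}{2}\|k\|_2^2<1$; the monotonicity argument of Section \ref{s2} then gives $\sup_{T}\sup_{t\in[0,1]}\E[\olam_{tT}^T]<\infty$, and the pointwise inequality $|\al_{tT}^T|\le\olam_{tT}^T$ yields the same bound for $\E[\al_{tT}^T]$. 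In particular $\sup_T\E[\langle\am^{*T}\rangle_1]=\sup_T\E\big[\int_0^1\olam_{sT}^T\,ds\big]<\infty$.

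Next I would establish C-tightness. The process $\am_t^{*T}=T^{-1/2}\am_{Tt}^T$ is a martingale with predictable quadratic variation $\langle\am^{*T}\rangle_t=\int_0^t\olam_{sT}^T\,ds$, bounded in expectation by the previous step, and with maximal jump $T^{-1/2}\to0$; Rebolledo's martingale criterion then gives its C-tightness. Writing $\ax_t^T=\int_0^t\al_{sT}^T\,ds+T^{-1/2}\am_t^{*T}$, the martingale part vanishes and the drift part is C-tight because $\E\int_s^t\olam_{uT}^T\,du\le C(t-s)$ controls its increments uniformly, so the pair $(\ax^T,\am^{*T})$ is C-tight with continuous limit points $(X,M^*)$. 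To identify the limit along a subsequence I would use the scaling identities $\phi^T(T(t-s))T=\beta\phi(t-s)$ and $k^T(T(t-s))\sqrt T=k(t-s)$ with $d\am_{Ts}^T=\sqrt T\,d\am_s^{*T}$, and solve \eref{e15} through the resolvent of $\beta\phi$: this expresses $\al_{tT}^T$ as a continuous functional of the source $\mu+\tfrac{\beta}{\sqrt T}\int_0^\cdot\phi\,d\am^{*T}+\alpha\big(\int_0^\cdot k\,d\am^{*T}\big)^2$. Since the noise-in-drift term carries the prefactor $\beta T^{-1/2}$ and drops out, the source converges to $\mu+\alpha(Z^*_t)^2$, giving $\al_{tT}^T\to V_t$ with $V$ the stated Volterra equation and hence $X_t=\int_0^t V_s\,ds$; the companion computation with the resolvent of $\overline{\beta}\phi$ gives $\olam_{tT}^T\to\overline{V}_t$. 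Passing to the limit in $\langle\am^{*T}\rangle_t=\int_0^t\olam_{sT}^T\,ds$ yields $\langle M^*\rangle_t=\int_0^t\overline{V}_s\,ds$, so after enlarging the probability space $M^*_t=\int_0^t\sqrt{\overline{V}_s}\,dB_s$ for a Brownian motion $B$.

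The step I expect to be the main obstacle is the passage to the limit inside the squared stochastic integral, together with the uniqueness it feeds into, both complicated by the possible singularity of $k$. Since squaring is not continuous at the level of jumps, I would first prove that the stochastic convolution $\int_0^t k(t-s)\,d\am_s^{*T}$ converges to $Z_t^*=\int_0^t k(t-s)\,dM^*_s$ in probability, uniformly on $[0,1]$ and jointly with $\am^{*T}$; this is precisely where assumption \ref{assumpm2}(iii) is used, as the bound $\int_0^t|k(\hat t-s)-k(t-s)|^2\,ds\le C|\hat t-t|^r$ and $k\in L^{2+\epsilon}$ drive a Burkholder--Davis--Gundy and Kolmogorov estimate for the convolution, after which the continuous mapping theorem applies to the square. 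For uniqueness I would view the limit as a fixed point of the map carrying $M^*$ to $Z^*=\int k\,dM^*$, then to the resolvent solutions $(V,\overline{V})$ of the two linear Volterra equations, then to the new martingale $\int_0^\cdot\sqrt{\overline{V}_s}\,dB_s$ for a fixed $B$. A priori bounds place $\sup_t|V_t|$ and $\sup_t\overline{V}_t$ in ranges governed by $\eta$ and $\overline{\eta}$, and the Lipschitz constant of the composed map is controlled by $\|\phi\|_1$ together with the factors $2\alpha\|k\|_2$ and $(\alpha_1+\alpha_2)\|k\|_2$ coming from differentiating $(Z^*)^2$ and the square root; assumption \ref{assumpm2}(ii) states exactly that these are strictly less than one, so the map is a contraction, the limit is unique, and every subsequential limit coincides, giving full convergence.

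Finally, for part 3 with $k=f^{\tilde{\alpha},\sigma}$ I would estimate the increments of $Z^*$ by Burkholder--Davis--Gundy, $\E|Z_t^*-Z_s^*|^p\lesssim\big(\int_0^s|k(t-u)-k(s-u)|^2\overline{V}_u\,du+\int_s^t|k(t-u)|^2\overline{V}_u\,du\big)^{p/2}$, which by the exponent $r=2\tilde{\alpha}-1$ of assumption \ref{assumpm2}(iii) and the boundedness of $\overline{V}$ is $\lesssim|t-s|^{(2\tilde{\alpha}-1)p/2}$. Taking $p$ large and applying Kolmogorov's continuity criterion gives $Z^*\in C^{\tilde{\alpha}-\frac12-\epsilon}$ almost surely for every $\epsilon>0$; as $Z^*$ is bounded on $[0,1]$ the square $(Z^*)^2$ keeps this exponent, and since $\int_0^t\phi(t-s)V_s\,ds$ is at least Lipschitz in $t$ (because $\phi\in L^1$ and $V$ is bounded), the Volterra equation transfers the regularity $\tilde{\alpha}-\frac12-\epsilon$ to $V$; the identical argument gives it for $\overline{V}$.
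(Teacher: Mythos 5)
Your moment bounds, C-tightness argument, identification of the limiting dynamics, the martingale representation via time change, and the Kolmogorov/BDG regularity argument for part 3 all track the paper's own route. The paper in fact omits these steps for Theorem \ref{theorem2}, declaring them analogous to its detailed proof of Theorem \ref{theorem3} and to \cite{aditi}, and that detailed proof proceeds exactly as you do: the renewal-equation bound on $\E[\olam^T_{tT}]$, tightness of the bracket processes via Theorem VI-4.13 of \cite{jacod} together with vanishing jump sizes, a resolvent (Volterra) representation of the rescaled intensity in which the noise-in-drift term carries a $T^{-1/2}$ prefactor and vanishes, BDG estimates for the stochastic convolution $\int_0^\cdot k\,d\am^{*T}$ (this is where assumption \ref{assumpm2}(iii) enters), Theorem V-3.9 of \cite{revuz} for part 2, and the arguments of Sections 4.3--4.4 of \cite{rough_fractional} for part 3.

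The genuine gap is in your uniqueness / full-convergence step, which is also the one place where you depart from the paper. The paper's uniqueness proof (Section \ref{uniq_m1}) is purely deterministic: it applies the Banach fixed point theorem to the operator $T[g](t)=\mu+\int_0^t\phi(t-s)g(s)\,ds+\alpha\bigl(\int_0^t k(t-s)g(s)\,ds\bigr)^2$ on the sup-norm ball $S=\{g\in\mathcal{C}[0,1]:\norm{g-\mu}_{sup}\le\eta\}$, and assumption \ref{assumpm2}(ii) is precisely the invariance-plus-contraction condition for this operator; neither the square root $\sqrt{\overline{V}}$ nor any stochastic integral enters, and correspondingly the theorem claims uniqueness only for $V$ while $\overline{V}$ is asserted to be merely ``a solution''. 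Your proposed contraction instead runs through the map $\overline{V}\mapsto\int_0^\cdot\sqrt{\overline{V}_s}\,dB_s\mapsto Z^*\mapsto(V,\overline{V})$, and this cannot work as stated: $x\mapsto\sqrt{x}$ is not Lipschitz on $[0,\infty)$ but only $1/2$-H\"{o}lder ($|\sqrt{a}-\sqrt{b}|\le\sqrt{|a-b|}$), so ``differentiating the square root'' produces no finite Lipschitz constant near $0$; the best composite estimate has the form $d(\mathrm{output})\le C\,d(\mathrm{input})^{1/2}$, which is not a contraction, and iterating it does not force two fixed points to coincide. Moreover the stochastic-integral step cannot be controlled pathwise in sup norm, so the fixed-point scheme would have to run in $L^p(\Omega;\mathcal{C}[0,1])$, where what you are attempting is pathwise uniqueness for a square-root (rough-Heston-type) Volterra SDE --- exactly the setting where Picard iteration is known to fail and where Yamada--Watanabe arguments do not extend to Volterra kernels. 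Hence your concluding claim that ``every subsequential limit coincides, giving full convergence'' is not established by your argument; the uniqueness the theorem asserts, and the paper proves, is the deterministic Volterra uniqueness for $V$ obtained from assumption (ii) by a contraction that never touches $\sqrt{\overline{V}}$ or the driving Brownian motion.
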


\subsection{The Near Unstable Regime}
\label{s3b}

A more interesting limit process arises when we consider the sequence of scaled processes that approaches the stability barrier.

Let 
\begin{equation}
a^T = \norm{\overline{\phi}^T}_1 + \frac{(\alpha_1+\alpha_2)}{2}\norm{k_T}_2^2. \label{a^T}
\end{equation} 
Recall that in deriving the scaling limit in the stable regime we had assumed that $\sup_{T>0} a^T < 1$. We now state the assumptions required to prove the scaling limit in the near unstable regime, that is $1 > a^T \to 1$ as $T \to \infty$.

\subsubsection{Assumptions}\label{assumpm3}
\renewcommand{\theenumi}{\roman{enumi}}%
\begin{enumerate}
 \item Fix $\phi \geq 0$ such that $\norm{\phi}_1 = 1$ and let 
\begin{align*}%\label{aa2n}
\phi^T(\cdot) = \phi\left(\frac{\cdot}{T}\right)\frac{\beta^T}{T} ,\;\;\;\; \overline{\phi}^T(\cdot) = \phi\left(\frac{\cdot}{T}\right)\frac{\overline{\beta}^T}{T},
\end{align*}
where $|\beta^T| = \frac{1}{1 + c_1(1-a^T)}$ and $\overline{\beta}^T = \frac{1}{1 + c_2(1-a^T)}$, where the constants $c_1, c_2$ satify $c_1 > c_2 > \frac{(\alpha_1+\alpha_2)}{2}$ and $1 > a^T \to 1$ as $T \to \infty$.

 \item There exists constants $K > 0$  and $\tilde{\alpha} \in (\half,1)$ such that
$$\tilde{\alpha} x^{\tilde{\alpha}} \int_x^{+\infty} \phi(s)ds \to K \text{ as } x \to +\infty.$$

 \item Let $\delta = K\frac{\Gamma(1-\tilde{\alpha})}{\tilde{\alpha}}$. There exists $\mu^*, \overline{\mu}^*, \sigma > 0$ and such that 
 \begin{equation*}%\label{aa3_nu}
     (1-a^T )T^ {\tilde{\alpha}} \to \sigma\delta, \; \mu^T T^{(1-\tilde{\alpha})} \to \frac{\mu^*}{\delta} \text{ and } \overline{\mu}^T T^{(1-\tilde{\alpha})} \to \frac{\overline{\mu}^*}{\delta} \text{ as } T \to \infty.
 \end{equation*}
 \item $k_T(\cdot) = k(\cdot/T) \sqrt{\frac{1-a^T}{T}}$, where $k$ is a non-negative continuously differentiable function such that $\norm{k}_2^2 = 1$.
 \end{enumerate}

{\bf Remark:} Observe that under the assumption \ref{assumpm3} (i) the right hand side of equation \eref{a^T} reduces to $$0< \frac{1}{1 + c_2(1-a^T)} + \frac{(\alpha_1+\alpha_2)}{2} (1- a^T) \to 1 \hbox{ as } T \to \infty.$$

\subsubsection{Scaling Limit in the Near Unstable Regime}
\label{s3bb}

Consider the sequence of scaled processes 
$\ax_{t}^{T}=\frac{1-a^T}{\overline{\mu}^T}\frac{\P_{tT}^{T}}{T},$ $\am_t^{*T} =\sqrt{\frac{1-a^T}{\overline{\mu}^T}}\frac{1}{\sqrt{T}}\am_{Tt}^T$, $t \in [0,1],$  $T>0.$
Let $f^{\tilde{\alpha}, \sigma}(s) = \sigma s^{\tilde{\alpha} -1} \sum_{n=0}^{\infty}\frac{(-\sigma s^{\tilde{\alpha}})^n}{\Gamma(\tilde{\alpha} n + \tilde{\alpha})}$ where the sum is the Mittag-Lefflar function as defined in \eref{mittag}. Further, let  $F^{\tilde{\alpha}, \sigma}(t) \coloneqq \int_0^t f^{\tilde{\alpha}, \sigma}(s)ds$.

\begin{theorem}\label{theorem3}
Under the assumptions \ref{assumpm3} (i)-(iv), the sequence $\left(\ax^T, \am^{*T}, t \in [0,1] \right)_{T > 0}$ 
is C-tight for the Skorokhod topology on $[0,1]$ and any subsequential limit process $\left( X_t,M_t^*, t \in [0,1] \right)$ will satisfy the following evolution equations.

\begin{equation} \label{X_t}
    X_t = \;\int_0^t \frac{1}{c_1}f^{\tilde{\alpha}, \sigma}(t-s)\frac{1}{\sqrt{\sigma\overline{\mu}^*}}\;M^*_{s}\;ds  + \int_0^t  \frac{1}{c_1}F^{\tilde{\alpha}, \sigma}(t-s)\left(\frac{\mu^*}{\overline{\mu}^*} + \alpha(Z^*_{s})^2\right)ds,
\end{equation}
with  $Z_t^*= \int_0^t k(t-s)\, dM^*_s$ and $\langle M^* \rangle = \overline{X}_t$ where,
\begin{equation} \label{X_bar_t}
\overline{X}_t = \;\int_0^t \frac{1}{c_2}f^{\tilde{\alpha}, \sigma}(t-s)\frac{1}{\sqrt{\sigma\overline{\mu}^*}}\;\overline{M}^*_{s}\;ds  + \int_0^t  \frac{1}{c_2}F^{\tilde{\alpha}, \sigma}(t-s)\left(1 + \frac{(\alpha_1+\alpha_2)}{2}(Z^*_{s})^2\right)ds.
\end{equation}
\textbf{Further, if $\tilde{\alpha} \in (\frac{1}{2}, 1)$, we have}

\hspace{0.5cm} 1. Upto an enlargement of the filtration, there exist Brownian motions $B^1,B^2$ such that 
\begin{equation}
    M^*_t =\int_0^t \sqrt{\overline{V}_s}\;dB^1_s, \qquad \overline{M}^*_t =\int_0^t \sqrt{\overline{V}_s}\;dB^2_s, \qquad \langle B^1, B^2\rangle_t = \frac{V_t}{\overline{V_t}} \label{lv4}
\end{equation}
\hspace{1.2cm} where, $V$  and $\overline{V_t}$  are solutions of 
\begin{alignat*}{1}
    V_t = &\;\int_0^t \frac{1}{c_1}f^{\tilde{\alpha}, \sigma}(t-s)\left[\frac{1}{\sqrt{\sigma\overline{\mu}^*}}\;dM^*_{s}  +\left(\frac{\mu^*}{\overline{\mu}^*} + \alpha(Z^*_{s})^2\right)ds\right],
\end{alignat*}

\begin{equation} 
    \overline{V}_t = \;\int_0^t \frac{1}{c_2}f^{\tilde{\alpha}, \sigma}(t-s)\left[\frac{1}{\sqrt{\sigma\overline{\mu}^*}}\;d\overline{M}^*_{s}  +\left(1 + \frac{(\alpha_1+\alpha_2)}{2}(Z^*_{s})^2\right)ds\right]. \label{lv5}
\end{equation}
\hspace{0.5cm} 2. The processes $(X_t,\overline{X_t})$ are almost surely differentiable with derivatives $(V_t, \overline{V_t})$.

\hspace{0.5cm} 3.  For any $\epsilon > 0$, $V$, $\overline{V}$ have almost surely $\tilde{\alpha}-\frac{1}{2}-\epsilon$ H\H{o}lder regularity.
\end{theorem}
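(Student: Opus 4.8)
My plan is to follow the now-standard three-stage programme for near-unstable Hawkes scaling limits used in the linear and symmetric-quadratic cases cited in the introduction, adapted to the bivariate structure, with the novel covariation read off at the end from the bracket identities already recorded in Section \ref{s2}. For tightness I would start from the rescaled intensity identity \eref{e15} and its analogue for $\olam_{tT}^T$, and solve these linear Volterra equations by introducing the resolvent of the kernel $T\phi^T(T\,\cdot\,)$ (resp. $T\overline{\phi}^T(T\,\cdot\,)$). Under Assumption \ref{assumpm3} the remark shows the relevant mass tends to $1$ from below, so for each fixed $T$ the resolvent exists, and together with the stationarity bound on $\E[\olam]$ from \eref{e11} it yields uniform-in-$T$ moment bounds $\sup_T\sup_{t\le 1}\E[(\am_t^{*T})^2]<\infty$ and likewise for $\ax^T$ and the rescaled $\aom$. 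Using $\langle\am\rangle_t=\int_0^t\olam_s\,ds$ from Section \ref{s2}, the bracket of the scaled martingale is controlled by these bounds, so the moment criterion on increments $\E[|\am_t^{*T}-\am_s^{*T}|^2\,|\am_u^{*T}-\am_t^{*T}|^2]\le C|u-s|^{1+\gamma}$ gives tightness; since the maximal jump sizes of $\ax^T$ and $\am^{*T}$ vanish as $T\to\infty$, tightness upgrades to C-tightness (continuity of every subsequential limit) for the Skorokhod topology. Assumption \ref{assumpm3}(iv), that $k\in C^1$, enters here to control the convolution term by integration by parts.

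The analytic heart of the identification of \eref{X_t}--\eref{X_bar_t} is that, under the tail condition \ref{assumpm3}(ii) and the calibrations $(1-a^T)T^{\tilde{\alpha}}\to\sigma\delta$, $\mu^T T^{1-\tilde{\alpha}}\to\mu^*/\delta$, $\overline{\mu}^T T^{1-\tilde{\alpha}}\to\overline{\mu}^*/\delta$ of Assumption \ref{assumpm3}(iii), the rescaled resolvent converges to the Mittag-Leffler kernel $f^{\tilde{\alpha},\sigma}$ and its antiderivative $F^{\tilde{\alpha},\sigma}$. I would prove this by a Laplace-transform computation: the transform of $T\phi^T(T\,\cdot\,)$ behaves near the relevant scale like $1-c_1(1-a^T)-Cz^{\tilde{\alpha}}+o(\cdot)$, so the resolvent transform converges to that of $f^{\tilde{\alpha},\sigma}$ uniformly on compacts, which one then inverts. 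Substituting the resolvent representation into the scaled price and intensity equations and passing to a subsequential limit yields \eref{X_t} and \eref{X_bar_t}; the factors $1/c_1$ and $1/c_2$ come from $|\beta^T|=1/(1+c_1(1-a^T))$ and $\overline{\beta}^T=1/(1+c_2(1-a^T))$, the drifts $\mu^*/\overline{\mu}^*$ and $1$ from the baseline normalizations, and $\alpha(Z^*_s)^2$ from the quadratic Zumbach term. The one genuinely delicate convergence is that of the stochastic convolution $\int_0^t k^T(T(t-s))\,d\am_{Ts}^T$ to $Z^*_t=\int_0^t k(t-s)\,dM^*_s$; I would establish it by integrating by parts (using $k\in C^1$) to replace the stochastic integral by an ordinary integral of the uniformly tight $\am^{*T}$ against $k'$, then invoke the joint convergence $(\am^{*T},\aom^{*T})\to(M^*,\overline{M}^*)$ from the tightness step.

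For the final assertions (valid when $\tilde{\alpha}\in(\tfrac12,1)$) I first identify $M^*$ and $\overline{M}^*$ as continuous local martingales and compute their brackets as limits of the discrete brackets. The key inputs are the Section \ref{s2} identities $\langle\am\rangle=\langle\aom\rangle=\int_0^t\olam_s\,ds$ and $\langle\am,\aom\rangle=\int_0^t\al_s\,ds$, the cross term vanishing because $\N{1}{}$ and $\N{2}{}$ have no common jumps. Under the scaling these pass to $\langle M^*\rangle_t=\langle\overline{M}^*\rangle_t=\overline{X}_t=\int_0^t\overline{V}_s\,ds$ and $\langle M^*,\overline{M}^*\rangle_t=X_t=\int_0^t V_s\,ds$, the uniform moment bounds supplying the uniform integrability needed to transfer brackets to the limit. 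Since both brackets are absolutely continuous with common density $\overline{V}$, and $\overline{V}_t$ is strictly positive a.s. (it dominates the positive drift built from $F^{\tilde{\alpha},\sigma}$ against $1+\tfrac{\alpha_1+\alpha_2}{2}(Z^*)^2$), a representation theorem for continuous local martingales with absolutely continuous brackets gives, after enlarging the filtration, Brownian motions $B^1,B^2$ with $M^*=\int\sqrt{\overline{V}}\,dB^1$ and $\overline{M}^*=\int\sqrt{\overline{V}}\,dB^2$; matching $\int_0^t\overline{V}_s\,d\langle B^1,B^2\rangle_s=\int_0^t V_s\,ds$ forces $d\langle B^1,B^2\rangle_t=(V_t/\overline{V}_t)\,dt$, the announced stochastic covariation in \eref{lv4}. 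Differentiability (part 2) is then immediate from \eref{X_t}--\eref{X_bar_t} by the fundamental theorem of calculus once $V,\overline{V}$ are shown continuous, using $(F^{\tilde{\alpha},\sigma})'=f^{\tilde{\alpha},\sigma}$; and the $\tilde{\alpha}-\tfrac12-\epsilon$ Hölder regularity (part 3) follows from a Kolmogorov estimate on $V_t-V_{t'}$ exploiting the near-diagonal blow-up $f^{\tilde{\alpha},\sigma}(s)\sim\sigma s^{\tilde{\alpha}-1}/\Gamma(\tilde{\alpha})$ together with the $L^p$-boundedness of $\sqrt{\overline{V}}$.

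I expect the principal difficulty to lie in this last stage: proving that $(M^*,\overline{M}^*)$ is genuinely a continuous martingale pair with exactly the brackets $\int_0^t\overline{V}_s\,ds$ and cross-bracket $\int_0^t V_s\,ds$, and that the state-dependent correlation $V/\overline{V}$ is well defined through positivity of $\overline{V}$, together with the stochastic-convolution convergence $Z^{*T}\to Z^*$ from the identification stage, since both require transferring information about random, converging brackets through a singular near-unstable kernel.
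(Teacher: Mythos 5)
Your overall architecture --- C-tightness, identification of the limit equations through the resolvent of $\beta^T\phi^T$ and its Mittag-Leffler limit, then bracket computations plus a martingale representation theorem to extract $B^1,B^2$ and the stochastic covariation $V/\overline{V}$ --- is the same as the paper's, and your stages two and three are essentially what the paper does: it quotes Lemma 4.3 of \cite{rough_fractional} for the kernel convergence you propose to re-derive by Laplace transforms, uses the same integration-by-parts trick (via $k\in C^1$) to pass $\Z^{*T}\to Z^*$, the same algebraic identity $\am^{*T}_t\overline{\am}^{*T}_t=(\am_t^{*(1,T)})^2-(\am_t^{*(2,T)})^2$ giving $\langle \am^{*T},\overline{\am}^{*T}\rangle_t=\Lam_t^T$ and hence $\langle M^*,\overline{M}^*\rangle_t=\int_0^t V_s\,ds$, and Theorem V-3.9 of \cite{revuz} for the Brownian representation.

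The genuine gap is in your tightness step. You propose the product-of-squared-increments criterion $\E[\abs{\am^{*T}_t-\am^{*T}_s}^2\abs{\am^{*T}_u-\am^{*T}_t}^2]\le C\abs{u-s}^{1+\gamma}$. Conditioning on $\cF_t$ turns the left-hand side into $\E[\abs{\am^{*T}_t-\am^{*T}_s}^2\,(\langle\am^{*T}\rangle_u-\langle\am^{*T}\rangle_t)]$, and since $\langle\am^{*T}\rangle_u-\langle\am^{*T}\rangle_t$ is a constant multiple of $\int_{tT}^{uT}\olam^T_r\,dr$, verifying this bound requires mixed \emph{second} moments $\E[\olam^T_r\,\olam^T_v]$ of the intensity, equivalently fourth moments of the stochastic convolution $\Z$. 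Nothing in Assumption \ref{assumpm3} provides this: the paper only ever controls the first moment $\E[\olam^T_t]$ (via \eref{e11} and \eref{expect_unstable}), and for quadratic Hawkes processes higher moments of the intensity demand strictly stronger conditions than stability --- the intensity has power-law tails (cf.\ \cite{blanc}) --- so uniform second-moment bounds cannot be expected as $a^T\to1$. This is precisely why the paper (following \cite{rough_fractional,aditi}) argues differently: by Theorem VI-4.13 of \cite{jacod}, tightness of the scaled martingales follows from tightness of their sharp brackets; the brackets and the compensators $\blt{i}{t}$ are increasing right-continuous processes, so their tightness reduces to uniform boundedness of the \emph{first} moments $\E[\blt{i}{1}]\le\E[\oLam^T_1]$, which is exactly \eref{expect_unstable}; vanishing jump sizes then upgrade tightness to C-tightness via Proposition VI-3.35 of \cite{jacod}. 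An Aldous-type argument would also close with only first moments, but the criterion as you state it would not. Two smaller caveats: strict positivity of $\overline{V}$ (needed for $V/\overline{V}$ to be meaningful) does not follow from ``domination by the drift,'' since the martingale term $\int_0^t f^{\tilde{\alpha},\sigma}(t-s)\,d\overline{M}^*_s$ can be negative (the paper is silent on this point as well); and differentiability of $X,\overline{X}$ is not ``immediate from the fundamental theorem of calculus'' because $f^{\tilde{\alpha},\sigma}$ is singular at the origin --- the paper defers this to the fractional-calculus arguments of Sections 4.3--4.4 of \cite{rough_fractional}, and your sketch should too.
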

% \begin{enumerate}[label=\Roman*., ref=\Roman*]
%     \item Outer item
%     \begin{enumerate}[label=\arabic*., ref=\arabic*]
%         \item Inner item
%         \item Another inner item   
%     \end{enumerate}
%     \item Another outer item
% \end{enumerate}
\section{Proofs}
In this section, we will prove Theorem \ref{theorem3}. The proof of tightness and the structure of the evolution equations in Theorem \ref{theorem2} is similar and is therefore omitted. The proof of uniqueness of the evolution equations uniqueness in Theorem \ref{theorem2} is given in Section~\ref{uniq_m1}. These proofs follow along similar lines to those in \cite{aditi}. We prove Theorem \ref{theorem3} into two parts. We will first prove C-tightness of the family 
$\left(\ax^T, \am_t^{*T}, t \in [0,1] \right)_{T > 0}$. This will ensure existence and path continuity of the subsequential limits. In the second part, we establish the dynamics of limit points to find the system of equations that any subsequential limit will satisfy.

Consider the process $\overline{\P}_t =\mathbf{N}^{(1,T)}_t + \mathbf{N}^{(2,T)}_t$, $t \in [0,T]$ with intensity 
\[ \overline{\al}_{t}^T = \lam{(1,T)}{t} + \lam{(2,T)}{t},\]
and the corresponding martingale $\overline{\am}_{t}^T = \overline{\P}_t - \int_0^t \overline{\al}_{s}^T \, ds.$
Define the scaled processes
\begin{equation*} %\label{eqn:XMbar*}
    \aox_{t}^{T} :=\frac{1-a^T}{\overline{\mu}^T}\frac{\overline{\P}_{tT}^{T}}{T}, \qquad \overline{\am}_t^{*T} =\sqrt{\frac{1-a^T}{\overline{\mu}^T}}\frac{1}{\sqrt{T}}\overline{\am}_{Tt}^T.
\end{equation*} 
We will also prove C-tightness of the family  $\left((\aox^T,\overline{\am}_t^{*T}), t \in [0,1] \right)_{T > 0}$ as their limits appear in \eref{lv4}, \eref{lv5}.

The scaled processes $\ax_{t}^{T}, \aox_{t}^{T}$ can be written as
\begin{equation*} %\label{pe1}
\ax_{t}^{T} = \xt{1}{t} - \xt{2}{t}, \qquad \aox_{t}^{T} = \xt{1}{t} + \xt{2}{t},
\end{equation*}
where
\begin{align*}%\label{x1_nu}
    \xt{i}{t}\coloneqq \frac{(1-a^T)}{\overline{\mu}^T}\frac{\nt{i}{tT}}{T}, \;\; i = 1,2.
    \end{align*}
The compensator of the process $\xt{i}{t}$ is given by
\begin{align}\label{Lami_nu}
    \blt{i}{t}\coloneqq \frac{(1-a^T)}{\overline{\mu}^T}\int_0^{t}\lt{i}{sT}\;ds, \;\; i = 1,2.
    \end{align}
Let
\begin{equation}
    \label{LamT}
    \Lam_t^{T} =  \blt{1}{t} - \blt{2}{t} = \frac{(1-a^T)}{\overline{\mu}^T}\int_0^{t}\al_{sT}^T\;ds, \qquad \oLam_t^{T} =  \blt{1}{t} + \blt{2}{t} = \frac{(1-a^T)}{\overline{\mu}^T}\int_0^{t}\overline{\al}_{sT}^T\;ds.
\end{equation}
%
%where $\blt{i}{t}$, $i=1,2$ are as defined in \eref{Lami_nu}.
%----------------------------------------------------------------------
\subsection{C-Tightness of the processes $\left((\ax^T, \aox^T,\am_t^{*T}, \overline{\am}_t^{*T}), t \in [0,1] \right)_{T > 0}$} \label{tightness_nu}
%---------------------------------------------------------------------
%\begin{proof}

%$\lt{i}{s}$
For the scaled martingale processes $\am_t^{*T}, \overline{\am}_t^{*T}$ we have
\begin{alignat*}{1}%\label{sharp6}
\left \langle \am^{*T} \right \rangle_t = \left \langle \overline{\am}^{*T} \right \rangle_t 
%& = \frac{1-a^T}{\overline{\mu}^T}\frac{1}{T} \left \langle \am^{T} \right \rangle_{Tt} \nonumber \\
& = \frac{1-a^T}{\overline{\mu}^T}\frac{1}{T} \left(\int_0^{tT} (\lt{1}{s}+\lt{2}{s}) \; ds\right) \nonumber \\
& = \blt{1}{t} + \blt{2}{t}.
\end{alignat*}
By Theorem VI-4.13 in [17], tightness of the family $\{(\am_t^{*T}, \overline{\am}_t^{*T}): t \in [0,1]\}_{T > 0}$ follows if we show that the corresponding family of sharp bracket processes is tight. Since linear combinations of tight processes are tight and joint tightness holds if the marginals are tight, it suffices to show C-tightness of the families of processes $\{\xt{i}{t}: t \in [0,1]\}_{T > 0}$, $\{\blt{i}{t}: t \in [0,1]\}_{T > 0}$, $i=1,2$.

First we shall show that these two families of processes are tight. Since $\xt{i}{t},\; \blt{i}{t}$ are right continuous increasing processes, it suffices to show tightness of the families of random variables $\xt{i}{1},\; \blt{i}{1}$. This in turn follows if we show that $\E[\xt{i}{1}]$ and $\E[\blt{i}{1}]$ are uniformly bounded in $T$. Clearly,
\begin{align*}%\label{2.1.1}
    \E[\xt{i}{1}] = \E[\blt{i}{1}] \leq 
    \E[\oLam_1^{T}].
\end{align*}
From (\ref{Lami_nu}), we have
\begin{align}\label{bound_Lam_bar_T}
\E[\oLam_1^{T}] =
\frac{(1-a^T)}{\overline{\mu}^T} \int_0^1 \E[\olam_{sT}^T] \; ds.
\end{align}
%
%where
%
%\[ \olam_{sT}^T = \lt{1}{sT} + \lt{2}{sT}.\]
%
Analogous to \eref{e11} we have
     \begin{alignat}{1}
     \E[\olam_{t}^T]
     % & =\; \overline{\mu}^T+\int_0^t\overline{\phi}^T(t-s)\E[\olam_{s}^T]ds + \frac{(\alpha_1+\alpha_2)}{2}\E\left[ \langle\int_0^tk^T(t-s)d\am_{s}^T\rangle \right] \nonumber \\
     % & = \; \overline{\mu}^T+\int_0^t\overline{\phi}^T(t-s)\E[\olam_{s}^{T}]ds + \frac{(\alpha_1+\alpha_2)}{2}\E\left[\int_0^t(k^T(t-s))^2d\langle\am^T\rangle_s\right] \nonumber \\
     & = \; \overline{\mu}^T+\int_0^t\left(\overline{\phi}^T(t-s) + \frac{(\alpha_1+\alpha_2)}{2}(k^T(t-s))^2\right)\E[\olam^T_{s}]ds.
     \label{101}
     \end{alignat}
Using $\olam^T_{s} \leq \olam^T_{t}$ for $s \leq t$ and then setting $t=1$ on the right in \eref{101} we obtain
\begin{align}\label{1a2}
    \E[\olam_{t}^T] \leq \frac{\overline{\mu}^T}{1-\left(\norm{\overline{\phi}^T}_1 + \frac{(\alpha_1+\alpha_2)}{2}\norm{k^T}_2^2\right)}
\end{align}
Using the assumptions \ref{assumpm3} to substitute for $\overline{\phi}^T,\, k^T$ in \eref{1a2} yields
\begin{align*}
\E[\olam_{t}^T] \leq \frac{\overline{\mu}^T}{1-\left(\overline{\beta}^T\norm{\phi}_1  + \frac{(\alpha_1+\alpha_2)}{2}(1-a^T)\norm{k}^2_2\right)}.
\end{align*}
Substituting the above bound in \eref{bound_Lam_bar_T} and using the assumptions \ref{assumpm3} (i), (iv), we obtain the bound
\begin{alignat}{1}
\E(\oLam_1^{T}) & \leq \frac{1-a^T}{1-\left(\overline{\beta}^T\norm{\phi}_1 + \frac{(\alpha_1+\alpha_2)}{2}(1-a^T)\norm{k}_2^2\right)} \nonumber \\
    & \leq \frac{1-a^T}{1-\left(\frac{1}{1+c_2(1-a^T)} + \frac{(\alpha_1+\alpha_2)}{2}(1-a^T)\right)} \nonumber \\
& \leq \frac{1}{\frac{c_2}{1 +c_2(1-a^T)} - \frac{(\alpha_1+\alpha_2)}{2}}
\to \frac{1}{c_2 - \frac{(\alpha_1+\alpha_2)}{2}} < \infty, \label{expect_unstable}
\end{alignat}
as $1-a^T \to 0$ as $T \to \infty$ and $c_2 > \frac{(\alpha_1+\alpha_2)}{2}$. This proves tightness. To verify that all convergent subsequences converge to processes with continuous paths, note that $\N{1}{t}$ and $\N{2}{t}$ are counting processes with jump size 1. Thus,
$$\abs{\Delta\ax_t^{(i,T)}} = \frac{1-a^T}{\overline{\mu}^T}\frac{\abs{\Delta\mathbf{N}_t^{(i,T)}}}{T} \leq \frac{1-a^T}{\overline{\mu}^TT} \leq \frac{(1-a^T)^2}{T^{1-\tilde{\alpha}}\overline{\mu}^T (1-a^T)T^{\tilde{\alpha}}} \to 0$$
by \ref{assumpm3} (iii) and the fact that $1-a^T \to 0$ as $T \to \infty$. Furthermore, $\abs{\Delta\Lam_1^T}=0$ as $\Lam_t^T$ is continuous on $[0,1]$. Therefore, by proposition VI-3.35 (from (iii) $\Rightarrow$ (i)) in \cite{jacod} we have that  $(\xt{i}{t})_{T > 0}$ and $(\blt{i}{t})_{T > 0}$, $i \in \{1,2\}$ are C-tight for the Skorokhod topology on $[0,1]$.
% \end{proof}

\subsection{Dynamics of the limit points}
In this section we find limit points for the family of processes $\left((\ax^T,\overline{\ax}^T, \am_t^{*T}, \overline{\am}_t^{*T}), t \in [0,1] \right)_{T > 0}$. 
By the following Lemma, it is enough to study the dynamics of limit points for the family of processes $\left( \Lam_t^{T} ,\oLam_t^{T},\am_t^{*T},\overline{\am}_t^{*T}, t \in [0,1] \right)_{T > 0}$.  

\begin{Lemma}\label{l10}
The sequence of martingales $\left(\ax_t^T - \Lam_t^{T}, t \in [0,1] \right)_{T > 0}$ and $\left(\aox^T - \oLam_t^{T}, t \in [0,1]  \right)_{T > 0}$ converge uniformly to $0$  in probability.
\end{Lemma}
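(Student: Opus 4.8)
The plan is to show that the scaled martingales $\ax_t^T - \Lam_t^T = \am_t^{*T}\cdot(\text{rescaling})$ and $\aox^T - \oLam_t^T = \overline{\am}_t^{*T}\cdot(\text{rescaling})$ vanish uniformly in probability as $T\to\infty$. The key observation is that $\ax^T_t - \Lam^T_t$ is (up to the deterministic normalizing factor $\tfrac{1-a^T}{\overline\mu^T T}$) exactly the compensated counting martingale $\am_{tT}^T$ after scaling — indeed $\ax^T_t = \tfrac{1-a^T}{\overline\mu^T}\tfrac{\P^T_{tT}}{T}$ and $\Lam^T_t = \tfrac{1-a^T}{\overline\mu^T}\int_0^t \al^T_{sT}\,ds$, so their difference is precisely a rescaling of $\am^T_{tT} = \P^T_{tT} - \int_0^{tT}\al^T_s\,ds$. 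Thus I expect the natural route is via Doob's $L^2$-maximal inequality applied to this martingale, reducing everything to a bound on its predictable quadratic variation.

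Let me sketch the steps concretely. Write $Y^T_t := \ax^T_t - \Lam^T_t$, a martingale with $\langle Y^T\rangle_t = \big(\tfrac{1-a^T}{\overline\mu^T T}\big)^2 \langle \am^T\rangle_{tT}$, and recall from the tightness computation that $\langle\am^T\rangle_{tT} = \int_0^{tT}(\lt{1}{s}+\lt{2}{s})\,ds$, whose expectation is controlled. First I would apply Doob's inequality:
\begin{equation*}
    \E\Big[\sup_{t\in[0,1]}\big|Y^T_t\big|^2\Big] \leq 4\,\E\big[\langle Y^T\rangle_1\big] = 4\Big(\tfrac{1-a^T}{\overline\mu^T T}\Big)^2 \E\Big[\int_0^T (\lt{1}{s}+\lt{2}{s})\,ds\Big].
\end{equation*}
Next I would recognize $\big(\tfrac{1-a^T}{\overline\mu^T T}\big)\int_0^T \E[\olam^T_s]\,ds = \E[\oLam_1^T]\cdot\tfrac{1}{?}$; more precisely, comparing with \eref{bound_Lam_bar_T} one sees $\tfrac{1-a^T}{\overline\mu^T}\int_0^1 \E[\olam^T_{sT}]\,ds = \E[\oLam_1^T]$, which by \eref{expect_unstable} is bounded uniformly in $T$ by a constant. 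Therefore the right-hand side above equals $4\cdot\tfrac{1-a^T}{\overline\mu^T T}\cdot\E[\oLam_1^T]$, and the extra prefactor $\tfrac{1-a^T}{\overline\mu^T T}$ carries the decay: using assumption \ref{assumpm3}(iii) that $(1-a^T)T^{\tilde\alpha}\to\sigma\delta$ and $\overline\mu^T T^{1-\tilde\alpha}\to\overline\mu^*/\delta$, I would estimate
\begin{equation*}
    \frac{1-a^T}{\overline\mu^T T} = \frac{(1-a^T)T^{\tilde\alpha}}{(\overline\mu^T T^{1-\tilde\alpha})\,T} \sim \frac{\sigma\delta\cdot\delta/\overline\mu^*}{T} \to 0.
\end{equation*}
Hence $\E[\sup_{t}|Y^T_t|^2]\to 0$, which gives uniform convergence to $0$ in probability (indeed in $L^2$) via Markov's inequality. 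The identical argument applies verbatim to $\overline Y^T_t := \aox^T_t - \oLam^T_t$, since $\langle \overline{\am}^T\rangle = \langle\am^T\rangle$ as established in the tightness section.

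The main obstacle — though it is more a bookkeeping than a conceptual one — is correctly tracking the normalizing powers of $T$ and verifying that the surviving prefactor genuinely decays. The whole argument hinges on the fact that the quadratic variation bound from \eref{expect_unstable} is $O(1)$ while the squared martingale normalization contributes an extra factor $\tfrac{1-a^T}{\overline\mu^T T} = O(T^{-1})$ vanishing in the limit; I would double-check this using assumption \ref{assumpm3}(iii) as above, since an error in the exponents would break the conclusion. A secondary point worth stating carefully is the passage from $L^2$ convergence of the supremum to uniform convergence in probability, which is immediate from Doob plus Markov but should be recorded for the Skorokhod-topology statement. No delicate continuity or tightness input is needed here beyond the already-proven uniform bound on $\E[\oLam_1^T]$.
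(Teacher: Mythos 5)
Your proposal is correct and takes essentially the same route as the paper: Doob's maximal inequality applied to the compensated martingales, with the expected bracket rewritten via \eref{bound_Lam_bar_T} and controlled by the uniform bound \eref{expect_unstable}; the paper's only cosmetic difference is that it first splits into the components $\xt{i}{t}-\blt{i}{t}$, $i=1,2$, and uses the optional bracket $\left[\mt{i}{}\right]_T=\nt{i}{T}$ rather than keeping $\am^{T}$, $\overline{\am}^{T}$ whole with their predictable brackets. One harmless slip in your exponent bookkeeping: $\frac{1-a^T}{\overline{\mu}^T T}=\frac{(1-a^T)T^{\tilde{\alpha}}}{(\overline{\mu}^T T^{1-\tilde{\alpha}})\,T^{2\tilde{\alpha}}}\sim \frac{\sigma\delta^2/\overline{\mu}^*}{T^{2\tilde{\alpha}}}$, not of order $T^{-1}$ as you wrote, but since $\tilde{\alpha}>\frac{1}{2}>0$ this prefactor still vanishes and your conclusion is unaffected.
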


\begin{proof} Observe that 
$\E\left[\sup\limits_{t \in [0,1]} \left(\aox_{t}^T - \oLam_{t}^T\right)^2\right]$ and $\E\left[\sup\limits_{t \in [0,1]} \left(\ax_{t}^T - \Lam_{t}^T\right)^2\right]$ are respectively bounded by
\[
\E\left[\sup\limits_{t \in [0,1]} \left((\xt{1}{t} - \blt{1}{t}) \pm (\xt{1}{t} - \blt{2}{t})\right)^2\right]
  \leq  2 \sum_{i=1}^2 \E\left[\sup\limits_{t \in [0,1]} \left(\xt{i}{t} - \blt{i}{t}\right)^2\right].
 \]
For $ i \in \{1,2\},$ we have 
\begin{alignat}{1}\label{112}
\E\left[\sup\limits_{t \in [0,1]} \left(\xt{i}{t} - \blt{i}{t}\right)^2\right]&= \E\left[\sup\limits_{t \in [0,1]} \left(\frac{\nt{i}{tT}}{T} - \int_0^t\lt{i}{sT}ds\right)^2\right]\nonumber\\
& = \E\left[\sup\limits_{t \in [0,1]} \left(\frac{\nt{i}{tT}}{T} - \frac{1}{T}\int_0^{tT}\lt{i}{s}ds\right)^2\right]\nonumber\\
& = \frac{1}{T^2}\E\left[\sup\limits_{t \in [0,1]} \left({\nt{i}{tT}} - \int_0^{tT}\lt{i}{s}ds\right)^2\right] \nonumber \\
& \leq \frac{4}{T^2} \; \E\left[\left(\mt{i}{T}\right)^2\right]
 =  \frac{4}{T^2}\;\E\left(\left[\mt{i}{}\right]_T \right),
\end{alignat} 
where the second-last inequality follows by the Doob's martingale inequality. Further,
\begin{align} \label{116}
\E\left(\left[\mt{i}{}\right]_T \right) = \E\left(\nt{i}{T}\right) = \E\left(\int_0^T\lt{i}{s}ds\right) = \E\left(T\int_0^1\lt{i}{Ts}ds\right) = E\left(T\blt{i}{1}\right) \leq \E(T\oLam_{1}^T).
\end{align} 
Substituting the bounds from \eref{expect_unstable},\eref{116} in \eref{112} we get 
\begin{align*}%\label{117}
\E\left[\sup\limits_{t \in [0,1]} \left(\xt{i}{t} - \blt{i}{t}\right)^2\right] \leq \frac{4}{T}\;\E(\oLam_{1}^T)
 \leq  \frac{4}{T}\frac{1}{\frac{c_2}{1 +c_2(1-a^T)} - \frac{(\alpha_1+\alpha_2)}{2}} \to 0.
\end{align*} 
as $T \rightarrow  \infty$ for $i \in \{1,2\}$ since $1-a^T \rightarrow 0 $ and $c_2 > \frac{(\alpha_1+\alpha_2)}{2}$.
\end{proof}

Let $(X,\overline{X},M^*, \overline{M}^* )$ be a limit point of $\left(\Lam^{T} ,\oLam^{T},\am^{*T},\overline{\am}^{*T}\right)_{T > 0}.$ By C-tightness, the process  $(X,M^*, \overline{X},\overline{M}^*)$ is continuous. By the Skhorohod representation theorem we may suppose that the convergence is almost surely uniform on $[0,1]$.

To derive the limit points of $( \Lam^{T} ,\oLam^T)$ as given in (\ref{LamT}) we begin by analysing the intensities $(\al^T, \olam^T)$. From \eref{e15} (with $\mu$ replaced by $\mu^T$) the intensity of the price process $P_{tT}^T$ satisfies
\begin{alignat}{1}
  \al_{tT}^T &=\;\mu^T +  \int_0^t\phi^T(T(t-s))T\al_{Ts}^Tds+\int_0^t\phi^T(T(t-s))\;d\am_{Ts}^T \; +\; \alpha\left[\int_0^tk^T(T(t-s))\;d\am_{Ts}^T\right]^2 \nonumber\\
    & \; = \; \;\mu^T + \alpha\left(\Z_{tT}^T\right)^2 +  \int_0^t\phi^T(T(t-s))T\al_{Ts}^Tds+\int_0^t\phi^T(T(t-s))\;d\am_{Ts}^T , \label{e16}
\end{alignat}
where
\[ \Z_{tT}^T \coloneqq \int_0^{tT} k_T(Tt-s)\am_s^T = \int_0^{t} k_T(T(t-s))\am_{sT}^T.\]
Proposition 2.1 from \cite{unstable} now yields
\[ \al_{tT}^T =\;\mu^T + \alpha\left(\Z_{tT}^T\right)^2 +  \int_0^tT\psi^T(T(t-s))\left(\mu^T + \alpha\left(\Z_{tT}^T\right)^2 \right)ds + \int_0^t\psi^T(T(t-s))\;d\am_{Ts}^T, \]
where $\psi^T \coloneqq \sum\limits_{i\geq0}(\beta^T \phi^T)^{\circledast i}$, with $\circledast$ denoting the convolution operator.

Therefore, $\al_{t}^{*T} \coloneqq \;\frac{(1-a^T)}{\overline{\mu}^T}\al_t^T$ is given by
\begin{equation} \label{e17} \al_{tT}^{*T} =\;\frac{(1-a^T)\mu^T}{\overline{\mu}^T} + \frac{1-a^T}{\overline{\mu}^T} \alpha (\mathbf{Z}_{tT}^T)^2 + \int_0^t (1-a^T)T\psi^T(T(t-s))\left(\frac{\mu^T}{\overline{\mu}^T} +\; \frac{\alpha}{\overline{\mu}^T}(\mathbf{Z}_{sT}^T)^2\right)ds. \end{equation}
Recall that $\am_t^{*T} \coloneqq \sqrt{\frac{1-a^T}{T\overline{\mu}^T}} \am_{tT}^T$. Let
\begin{align*}
    \Z^{*T}_t \coloneqq \frac{\Z_{tT}^T }{\sqrt{\overline{\mu}^T}} = \int_0^t k(t-s) d\am^{*T}_s,% \label{not6}
\end{align*}
where the last equality follows from assumption \ref{assumpm3} (iv). We can thus rewrite (\ref{e17}) as
\begin{alignat}{1}
  \al_{tT}^{*T}
&\; =\;\frac{(1-a^T)\mu^T}{\overline{\mu}^T} + (1-a^T) \alpha (\mathbf{Z}_{t}^{*T})^2 + \int_0^t f^T(t-s)\left(\frac{\mu^T}{\overline{\mu}^T} +\; \frac{\alpha}{\overline{\mu}^T}(\mathbf{Z}_{sT}^T)^2\right)ds \nonumber\\ 
  &\; +\; \int_0^t \frac{f^T(t-s)}{\sqrt{\overline{\mu}^T(1-a^T)T}}\;d\am_{s}^{*T}, \label{e18} 
\end{alignat}
where $f^T(s) \coloneqq (1-a^T)T\psi^T (Ts).$
Note that $\Lam_t^T = \int_0^t \al_{sT}^{*T}ds$. Let $F^T(t) \coloneqq \int_0^t f^T(s) \; ds$. Substituting from (\ref{e18}) and interchanging the order of the integrals we obtain
 \begin{alignat}{1}\label{nu18}
  \Lam_t^T  &=\;\frac{(1-a^T)\mu^Tt}{\overline{\mu}^T} +  \int_0^t (1-a^T) \alpha (\mathbf{Z}_{s}^{*T})^2 ds +\;\int_0^t \frac{F^T(t-s)}{\sqrt{\overline{\mu}^T(1-a^T)T}}\;d\am_{s}^{*T}\\ \nonumber 
  &\; +\; \int_0^t F^T(t-s)\left(\frac{\mu^T}{\overline{\mu}^T} + \alpha(\mathbf{Z}_{s}^{*T})^2\right)ds.
\end{alignat}
Similarly, for the process $\overline{\al}_{t}^{*T}=\;\frac{(1-a^T)}{\overline{\mu}^T}\olam_t^T$ we have
\begin{alignat*}{1}%\label{nu9}
  \olam_{tT}^{*T} &=\;(1-a^T) +  \frac{1-a^T}{\overline{\mu}^T} \left(\frac{\alpha_1+\alpha_2}{2}\right) (\mathbf{Z}_{tT}^T)^2  +
  \; +\; \int_0^t \frac{(1-a^T)T\overline{\psi}^T(T(t-s))}{\sqrt{\overline{\mu}^T(1-a^T)T}}\;d\overline{\am}_{Ts}^{*T}\\ \nonumber
  & + \; \int_0^t (1-a^T)T\overline{\psi}^T(T(t-s))\left(1 + \frac{(\alpha_1+\alpha_2)}{2\overline{\mu}^T}(\mathbf{Z}_{sT}^T)^2\right)ds,
\end{alignat*}
where $\overline{\psi}^T \coloneqq \sum\limits_{i\geq0}(\beta^T \overline{\phi}^T)^{\circledast i}$. Further, if we define 
\begin{equation*}
 \overline{F}^T(t) \coloneqq \int_0^t \overline{f}^T(s) ds , \hbox{ with } \overline{f}^T(s) \coloneqq (1-a^T)T\overline{\psi}^T (Ts), 
\end{equation*}
we obtain
 \begin{alignat}{1}\label{nu19}
  \oLam_t^T = \int_0^t \olam_{sT}^{*T} &=\;(1-a^T)t +  \int_0^t(1-a^T)\left(\frac{\alpha_1+\alpha_2}{2}\right) (\mathbf{Z}_{s}^{*T})^2 ds 
  \; +\; \int_0^t \frac{\overline{F}^T(t-s)}{\sqrt{\overline{\mu}^T(1-a^T)T}}\;d\overline{\am}_{Ts}^{*T}\\ \nonumber
  & + \; \int_0^t \overline{F}^T(t-s)\left(1 + \frac{(\alpha_1+\alpha_2)}{2}(\mathbf{Z}_{s}^{*T})^2\right)ds.
\end{alignat}
% #------------
From (\ref{nu18}), (\ref{nu19}) we see that to find the dynamics of the limit points for $(\Lam_t^T, \oLam_t^T)_{T > 0}$, we need to find the limit points of  
$(\am_t^{*T}, \overline{\am}_t^{*T}, \mathbf{Z}_{s}^{*T})_{T > 0}$.

Since $\abs{\Delta\am^{*T}}, \abs{\Delta\overline{\am}^{*T}} \leq \frac{2(1-a^T)}{\overline{\mu}^T\sqrt{T}}$, we have from Corollary IX-1.19 in \cite{jacod} that 
$$[\am^{*T}]_t = [\overline{\am}^{*T}]_t  =  \aox_{t}^T.$$
\comment{ , where \begin{align*}%\label{overlinex}
\aox_{t}^T \coloneqq  \xt{1}{t} + \xt{2}{t} = \frac{(1-a^T)}{\overline{\mu}^T}\frac{(\nt{1}{tT} + \nt{2}{tT})}{T}. 
\end{align*}
Thus, by using similar arguments as in section \ref{tightness_nu} we will have that $\aox_{t}^T$ is C-tight. Further, on similar lines of Lemma \ref{l10} we will have $\left(\aox_{t}^T - \oLam_t^{T}\right)_{T > 0}$ converges uniformly to $0$ in probability on $[0,1]$.
Consequently, by corollary VI-6.29 in \cite{jacod} we will have $[M^*]_t = \overline{X}_t$. }

Further, $M^*_t, \overline{M}^*_t$ are continuous as they are limits of  C-tight processes $\{(\am_t^{*T}, \overline{\am}_t^{*T}): t \in [0,1]\}_{T > 0}$. Consequently, we have
\begin{align*}%\label{24.1_nu}
[M^*]_t = \langle M^* \rangle _t = [\overline{M}^*]_t = \langle \overline{M}^* \rangle _t = \overline{X}_t .
\end{align*}
Moreover, $\E[\aox_{1}^T]$ is uniformly bounded as  $\E[\xt{1}{1}]$ and $\E[\xt{2}{1}]$ are uniformly bounded in T by \eref{expect_unstable}. By Fatou's Lemma we have that $\overline{X}_1$ is integrable. Hence, we have $\E([M^*]_t) = \E([\overline{M}^*]_t) =\E[\overline{X}_t] \leq \E[\overline{X}_1]$ is bounded, and thus  $M^*, \overline{M}^*$ are true martingales.

We are now ready to derive the limit of each term on the right in \eref{nu18}. The first term on the RHS  of \eref{nu18} converges to $0$ as $T \to \infty$ as $(1-a^T) \to 0$ and $\frac{\mu^T}{\overline{\mu}^T} \to \frac{\mu^*}{\overline{\mu}^*}$ (from assumption \ref{assumpm3} (iii)).

Consider the second term on RHS of \eref{nu18}. Recall that $\Z^{*T}_t \coloneqq \frac{\Z_{tT}^T }{\sqrt{\overline{\mu}}} = \int_0^t k(t-s) d\am^{*T}_s$. Using integration by parts we obtain
$$\Z^{*T}_t = \int_0^t k(t-s) d\am^{*T}_s = k(0)\am^{*T}_t + \int_0^t k'(t-s)\am_s^{*T}ds,$$
where $k'$ is the derivative of $k$. By assumption \ref{assumpm3} (iv) the subsequential limit $Z^*_t$ of $\Z^{*T}_t$ exists and satisfies
$$Z^*_t = \int_0^t k(t-s) dM^{*}_s = k(0)M^{*}_t + \int_0^t k'(t-s)M_s^{*}ds,$$
which is continuous since $k'$, $M^*$ are continuous. 
Thus, the second term on the right in \eref{nu18} converges to $0$.
 
Using Lemma 4.3 in \cite{rough_fractional} we have for the fourth term on the right in \eref{nu18} 
 \begin{align}
  F^T(t) \coloneqq T\int_0^tT(1-a^T)\psi^T(Ts)ds \to \frac{1}{c_1}F^{\tilde{\alpha}, \sigma}(t),\qquad \mbox{ as } T \to \infty,\label{f1}
\end{align}
where the convergence is uniform in $t$. Here,  $F^{\tilde{\alpha}, \sigma}(t) \coloneqq \int_0^t f^{\tilde{\alpha}, \sigma}(s)ds,$ and $f^{\tilde{\alpha}, \sigma}(s) \coloneqq \sigma s^{\tilde{\alpha} -1} \sum_{n=0}^{\infty}\frac{(-\sigma s^{\tilde{\alpha}})^n}{\Gamma(\tilde{\alpha} n + \tilde{\alpha})}$ with the summation term in $f^{\tilde{\alpha}, \sigma}(s)$ being the Mittag-Lefflar function as defined in \eref{mittag}.   
It follows that
 $$\int_0^t F^T(t-s)(\Z^{*T}_s)^2ds \xrightarrow{T \to \infty} \int_0^t \frac{1}{c_1}F^{\tilde{\alpha}, \sigma}(t-s)(Z^{*}_s)^2ds.$$
 Consequently, the fourth term on the right in \eref{nu18} converges to $\int_0^t \frac{1}{c_1}F^{\tilde{\alpha}, \sigma}(t-s)\left(\frac{\mu^*}{\overline{\mu}^*} + \alpha(Z_{s}^{*})^2\right)ds.$

It remains to analyze the third term on the right in \eref{nu18}. Applying integration by parts we get
 \begin{alignat}{1} \label{eq87}
     \int_0^t \frac{F^T(t-s)}{\sqrt{\overline{\mu}^T(1-a^T)T}}\;d\am_{s}^{*T} 
%     & = \left[\frac{F^T(t-s)}{\sqrt{\overline{\mu}^T(1-a^T)T}}\;\am_{s}^{*T}\right]_0^t - \int_0^t \frac{-f^T(t-s)}{\sqrt{\overline{\mu}^T(1-a^T)T}}\;\am_{s}^{*T}ds\nonumber\\
  & = \int_0^t \frac{f^T(t-s)}{\sqrt{\overline{\mu}^T(1-a^T)T}}\;\am_{s}^{*T}ds.
\end{alignat}
To obtain the limit for the expression on the right in (\ref{eq87}) we write
 \begin{alignat}{1}\label{eq88}
 \int_0^t f^T(t-s)\am_{s}^{*T}ds = &\int_0^t \frac{1}{c_1}f^{\tilde{\alpha}, \sigma}(t-s)M_{s}^{*}ds
  + \int_0^t \frac{1}{c_1}f^{\tilde{\alpha}, \sigma}(t-s)(\am_{s}^{*T}-M_{s}^{*})ds \nonumber\\
  &  + \int_0^t \left(f^T(t-s)-\frac{1}{c_1}f^{\tilde{\alpha}, \sigma}(t-s)\right)\am_{s}^{*T}ds. 
 \end{alignat}
 The second term on the right in the above equation converges uniformly to $0$ almost surely by the remark above \eref{e16}. Applying integration by parts to the third term on the right in \eref{eq88}, we get
 \begin{equation*}%\label{eq89}
 \int_0^t \left(f^T(t-s)-\frac{1}{c_1}f^{\tilde{\alpha}, \sigma}(t-s)\right)\am_{s}^{*T}ds = \int_0^t \left(F^T(t-s)-\frac{1}{c_1}F^{\tilde{\alpha}, \sigma}(t-s)\right)d\am_{s}^{*T}.
 \end{equation*}
By the Burkholder-Davis-Gundy inequality we obtain ($C_i$ below are positive constants)
%
%  \begin{alignat}{1}
%  \E\left[\lim\sup_{t \in [0,1]}\left( \int_0^t\left(F^T(t-s)-\frac{1}{c_1}F^{\tilde{\alpha}, \sigma}(t-s)\right)d\am_{s}^{*T} \right)^2\right] \nonumber\\
% & \hspace{-100pt} \leq C_1 \;\E\left[\lim\sup_{t \in [0,1]} \int_0^t\left(F^T(t-s)-\frac{1}{c_1}F^{\tilde{\alpha}, \sigma}(t-s)\right)^2\;d\aox_{s}^{T} \right] \nonumber\\
%  & \hspace{-100pt} \leq C_2\lim\sup_{t \in [0,1]} \int_0^t\left(F^T(t-s)-\frac{1}{c_1}F^{\tilde{\alpha}, \sigma}(t-s)\right)^2 \frac{1-a^T}{\overline{\mu}^T}\E[\olam_{Ts}^T]\;ds \nonumber\\
%  & \hspace{-100pt} \leq C_3\lim\sup_{t \in [0,1]} \int_0^t\left(F^T(t-s)-\frac{1}{c_1}F^{\tilde{\alpha}, \sigma}(t-s)\right)^2\;ds \rightarrow   0, \label{eq90}
%  \end{alignat}
%
\begin{alignat*}{1}
 \E\left[\sup_{t \in [0,1]}\left( \int_0^t\left(F^T(t-s)-\frac{1}{c_1}F^{\tilde{\alpha}, \sigma}(t-s)\right)d\am_{s}^{*T} \right)^2\right] \nonumber\\
& \hspace{-100pt} \leq C_1 \;\E\left[\int_0^t\left(F^T(t-s)-\frac{1}{c_1}F^{\tilde{\alpha}, \sigma}(t-s)\right)^2\;d\aox_{s}^{T} \right] \nonumber\\
 & \hspace{-100pt} \leq C_2 \int_0^t\left(F^T(t-s)-\frac{1}{c_1}F^{\tilde{\alpha}, \sigma}(t-s)\right)^2 \frac{1-a^T}{\overline{\mu}^T}\E[\olam_{Ts}^T]\;ds \nonumber\\
 & \hspace{-100pt} \leq C_3\int_0^t\left(F^T(t-s)-\frac{1}{c_1}F^{\tilde{\alpha}, \sigma}(t-s)\right)^2\;ds \rightarrow   0, %\label{eq90}
 \end{alignat*}

by \eref{f1}. It follows that the third term on the right in \eref{nu18} converges to $\int_0^t \frac{1}{c_1}f^{\tilde{\alpha}, \sigma}(t-s)\frac{1}{\sqrt{\sigma\overline{\mu}^*}}\;M^*_{s}\;ds$.

This completes the proof for the 
dynamics of $\ax$ in the Theorem \ref{theorem3}. The proof for the dynamics of $\aox$ in the Theorem \ref{theorem3} follows from \eref{nu19} using similar arguments and we shall omit it. 
%\end{proof}

\subsection{Regularity property for $\tilde{\alpha} \in (1/2,1)$}

Regularity follows along the same lines as in section 4.3 and 4.4 of \cite{rough_fractional} and so we only provide a sketch. Recall from \eref{X_t} and  \eref{X_bar_t} that

  $$X_t = \;\int_0^t \frac{1}{c_1}f^{\tilde{\alpha}, \sigma}(t-s)\frac{1}{\sqrt{\sigma\overline{\mu}^*}}\;M^*_{s}\;ds  + \int_0^t  \frac{1}{c_1}F^{\tilde{\alpha}, \sigma}(t-s)\left(\frac{\mu^*}{\overline{\mu}^*} + \alpha(Z^*_{s})^2\right)ds,$$
  and 
 $$\overline{X}_t = \;\int_0^t \frac{1}{c_2}f^{\tilde{\alpha}, \sigma}(t-s)\frac{1}{\sqrt{\sigma\overline{\mu}^*}}\;\overline{M}^*_{s}\;ds  + \int_0^t  \frac{1}{c_2}F^{\tilde{\alpha}, \sigma}(t-s)\left(1 + \frac{(\alpha_1+\alpha_2)}{2}(Z^*_{s})^2\right)ds.$$
  Note that $\int_0^t Z_s^* ds$ is continuously differentiable as $Z^*$ is continuous. Using arguments similar to those in Sections 4.3 and 4.4 of \cite{rough_fractional},
  we conclude that $X$ and $\overline{X}$ are almost surely differentiable with respective derivatives $V$ and $\overline{V}$ satisfying
 $$V_t = \;\int_0^t \frac{1}{c_1}f^{\tilde{\alpha}, \sigma}(t-s)\left[\frac{1}{\sqrt{\sigma\overline{\mu}^*}}\;dM^*_{s}  +\left(\frac{\mu^*}{\overline{\mu}^*} + \alpha(Z^*_{s})^2\right)ds\right],$$
and
$$\overline{V}_t = \;\int_0^t \frac{1}{c_2}f^{\tilde{\alpha}, \sigma}(t-s)\left[\frac{1}{\sqrt{\sigma\overline{\mu}^*}}\;d\overline{M}^*_{s}  +\left(1 + \frac{(\alpha_1+\alpha_2)}{2}(Z^*_{s})^2\right)ds\right].$$
Moreover, by theorem V-3.9 in \cite{revuz}, there exists two Brownian motions $B^1$ and $B^2$ such that
    $$M^*_t =\int_0^t \sqrt{\overline{V}_s}\;dB^1_s, \qquad \overline{M}^*_t =\int_0^t \sqrt{\overline{V}_s}\;dB^2_s$$ 
as $[M^*]_t = [\overline{M}^*]_t = \overline{X}_t$ and  $\overline{X}$ is almost surely differentiable with derivative $\overline{V}_t$. Consequently,
$$Z_t^* = \int_0^t k(t-s)\, dM^{*}_s = \int_0^t k(t-s)\sqrt{\overline{V}_s}\;dB^1_s,$$
from which it follows (see section 4.3 and section 4.4 of \cite{rough_fractional}) that $V_t$ and $\overline{V}_t$ are almost surely $\alpha-\frac{1}{2}-\epsilon$ H\H{o}lder regular. 

\subsection{Correlation between the Brownian motions}

In order to compute $\langle B^1,B^2\rangle_t$ observe that
$$\am_{t}^{*T} = \am_{t}^{*(1,T)}-\am_{t}^{*(2,T)} \hbox{ and } \overline{\am}_t^{*T} = \am_{t}^{*(1,T)}+\am_{t}^{*(2,T)}$$ 
where $\am_{t}^{*(i,T)} = \sqrt{\frac{1-a^T}{T\overline{\mu}^T}}\am_{tT}^{(i,T)}$ for $i \in \{1,2\}.$ Thus
$\am_{t}^{*T}\overline{\am}_t^{*T} = (\am_{t}^{*(1,T)})^2-(\am_{t}^{*(2,T)})^2$
from which it follows that $\am_{t}^{*T}\overline{\am}_t^{*T} - \Lambda_t^T$ is a martingale and hence $\langle \am_{t}^{*T},\overline{\am}_t^{*T}\rangle_t = \Lambda_t^T$. Moreover, by Theorem \ref{theorem3} we have 
$$\am_{t}^{*T} \rightarrow M^*_t =\int_0^t \sqrt{\overline{V}_s}\;dB^1_s\; , \; \overline{\am}_t^{*T} \rightarrow \overline{M}^*_t =\int_0^t \sqrt{\overline{V}_s}\;dB^2_s\; \hbox{ and }\;  \Lambda_t^T \rightarrow \int_0^t V_s \;ds.$$   
It follows that
$$\langle M^*, \overline{M}^*\rangle_t = \int_0^t \overline{V}_s \;d\langle B^1,B^2\rangle_s \;\;\hbox{ and } \;\; \langle M^*, \overline{M}^*\rangle_t = \int_0^t V_s \;ds.$$ 
Comparing the two expressions for $\langle M^*, \overline{M}^*\rangle_t$ we conclude that
$$\langle B^1,B^2\rangle_t = \frac{V_t}{\overline{V_t}}.$$
This completes the proof of Theorem \ref{theorem3}. \qed

%---------------------------------------------
\section{Appendix: Uniqueness of limiting process for the Stable Regime} 
%label{appen}
\label{uniq_m1}
%----------------------------------------------------------------------
% \addtocontents{toc}{\protect\setcounter{tocdepth}{1}}
%\subsection{Uniqueness of limiting process for the Stable Regime} 
% \addtocontents{toc}{\protect\setcounter{tocdepth}{2}}
We will prove the uniqueness of the limit in Theorem \ref{theorem2} using the Banach fixed point theorem for operators (see \cite{picard}, page-4). We will prove uniqueness for $V_t$. Uniqueness for $\overline{V}_t$ can be proved similarly.
% \addtocontents{toc}{\protect\setcounter{tocdepth}{1}}
\subsection{Uniqueness for $V_t$:}
% \addtocontents{toc}{\protect\setcounter{tocdepth}{2}}
%
\begin{align}
 V_t = \mu + \int_0^t \phi(t-s)V_sds + \alpha
 \left(\int_0^t k(t-s)V_sds\right)^2, \qquad t \in [0,1].\label{49}
\end{align}

We define an operator $T: S \rightarrow S$ as
$$T[g](t) = \mu+ \int_0^t \phi(t-s)g(s)ds + \alpha
 \left(\int_0^t k(t-s)g(s)ds\right)^2,$$
 where $S = \{g \in \mathcal{C}[0,1] | \; \norm{g - v_0}_{sup} \leq \eta\}, v_0 = V(0) = \mu \text{ and } \eta$ to be determined later. We wish to choose $\eta$ appropriately so that T satisfies the following two conditions required for applying the Banach fixed point theorem: 
 %In order to use Banach fixed point theorem and conclude the operator $T$ has a fixed point in set $S$ (page-4 \cite{picard}) we need to show:
 %
\begin{enumerate}
    \item $T$ maps $S$ to $S$,
    \item $T$ is a contraction.
\end{enumerate}

\textbf{Step-1:} To find conditions on $\norm{\phi}_1, \norm{k}_2, \text{ and } \eta$ so that $T$ maps $S$ to $S$. For any $g \in S$ we have
\begin{alignat*}{2}
\abs{T[g](t)-v_0(t)} &= \abs{\int_0^t \phi(t-s)g(s)ds + \alpha
 \left(\int_0^t k(t-s)g(s)ds\right)^2}\\
 & \leq \norm{g}_{sup}\int_0^1 \phi(t-s)ds + \alpha \norm{g}^2_{sup}
\left(\int_0^t k(t-s)ds\right)^2\\
 & \leq   \norm{g}_{sup} \norm{\phi}_1 +\alpha \norm{g}^2_{sup}\norm{k}_2^2 
 \end{alignat*}
Thus we have $\norm{T[g]-v_0}_{sup} \leq  \norm{g}_{sup} \norm{\phi}_1 +\alpha \norm{g}^2_{sup}\norm{k}_2^2$.
%
%Now, for $T[g] \in S$ we want that $\norm{T[g]-v_0}_{sup} \leq \eta$, which implies $\norm{g}_{sup}\norm{\phi}_1 +\alpha \norm{g}^2_{sup}\norm{k}_2 \leq \eta$. 
%
As $g \in S$ we have $\norm{g}_{sup} \leq \eta +\abs{\mu}$ and thus $T[g] \in S$ provided
\begin{align}
    (\eta +\abs{\mu})\norm{\phi}_1 +\alpha (\eta +\abs{\mu})^2\norm{k^2}_1 \leq \eta. \label{50}
\end{align}

\textbf{Step-2:} To find conditions on $\norm{\phi}_1, \norm{k}_2, \text{ and } \eta$ so that $T$ is a contraction. For $w,u \in S$ we have
\begin{alignat*}{2}
\abs{T[w](t)- T[u](t)} &\leq \abs{\int_0^t \phi(t-s)(w(s)-u(s))ds} + \alpha\left(\left(\int_0^tk(t-s)w(s)ds\right)^2 - \left(\int_0^tk(t-s)u(s)ds\right)^2\right)\\
&\leq \norm{w-u}_{sup}\norm{\phi}_1 + \alpha\left(\abs{\int_0^tk(t-s)(w(s)-u(s))ds}\right) \left(\abs{\int_0^tk(t-s)(w(s)+u(s))ds}\right)\\
&\leq \norm{w-u}_{sup}\left(\norm{\phi}_1 + \alpha\norm{k}_2^2\norm{u + w}_{sup}\right) \\
&\leq \norm{w-u}_{sup}\left(\norm{\phi}_1 + 2\alpha\norm{k}_2^2\left(\eta + \abs{\mu}\right)\right)
\end{alignat*}%
%
%Hence, we have $\norm{T[w](t)- T[u](t)}_{sup} \leq \norm{w-u}_{sup}\left(\norm{\phi}_1 + 2\alpha\norm{k^2}_1\left(\eta + \abs{\mu}\right)\right)$.
Thus $T$ is a contraction if
\begin{align}
   \norm{\phi}_1 + 2\alpha\norm{k}_2^2\left(\eta + \abs{\mu}\right) < 1.\label{51}
\end{align}
It follows from (\ref{50}) and (\ref{51}) that for $T$ to satisfy the conditions of Banach fixed point theorem it is enough to have 
 \begin{align}\label{e1n}
     \norm{\phi}_1 + 2\alpha\norm{k}_2^2\left(\eta + \abs{\mu}\right) < \frac{\eta}{\eta+\abs{\mu}}.
 \end{align}
 This proves the uniqueness for the process $V_t$.
 
 \comment{On using solution for quadratic equations with inequality it can be seen that for $\eta \in [a,b]$ and for $\norm{k^2}_1$ sufficiently small we will have unique solution for equation(\ref{49}). Here $a = \frac{1-\norm{\phi}_1-\sqrt{\left(\norm{\phi}_1-1\right)^2-8\alpha\norm{k^2}_1\abs{\mu}}}{4\norm{k^2}_1}$ and $b = \frac{1-\norm{\phi}_1+\sqrt{\left(\norm{\phi}_1-1\right)^2-8\alpha\norm{k^2}_1\abs{\mu}}}{4\norm{k^2}_1}$. 
 Thus, once we have a unique solution for small $\eta$, the same will work for a large value of $\alpha$. 
 
 Thus, equation (\ref{49}) has a unique solution in $\left[\abs{\mu}-\eta,\abs{\mu}+\eta\right]$ for $\eta \in [a,b]$, for all $t \in [0,1]$ such that $\left(\norm{\phi}_1-1\right)^2-8\alpha\norm{k^2}_1\abs{\mu} > 0.$}
 
 % \addtocontents{toc}{\protect\setcounter{tocdepth}{1}}
\subsection{Uniqueness for $\overline{V}_t$:}
% \addtocontents{toc}{\protect\setcounter{tocdepth}{2}}

 \begin{align*}
 \overline{V}_t = \overline{\mu} + \int_0^t \overline{\phi}(t-s)\overline{V}_sds + \frac{(\alpha_1+\alpha_2)}{2}
 \left(\int_0^t k(t-s)\overline{V}_sds\right)^2, \text{for } t \in [0,1].%\label{59}
\end{align*}
%
%On similar lines as for uniqueness for process $V_t$, we will have 
Uniqueness for process $\overline{V}_t$ follows along similar lines as above if there exists a positive constant $\overline{\eta}$ such that
\begin{align*}%\label{e2n}
     \norm{\phi}_1 + (\alpha_1+\alpha_2)\norm{k^2}_1\left(\overline{\eta} + \overline{\mu}\right) < \frac{\overline{\eta}}{\overline{\eta}+\overline{\mu}}.
 \end{align*}

\comment{$\overline{V}_t$  in $\left[\overline{\mu}-\eta,\overline{\mu}+\eta\right]$ for $\eta \in [a,b]$, for all $t \in [0,1]$ such that $\left(\norm{\phi}_1-1\right)^2-8\frac{(\alpha_1+\alpha_2)}{2}\norm{k^2}_1\overline{\mu} > 0.$}

%#++++++++++++++++++++++++++++++++++++++++++++++++++++++++++

%============= REFERENCES +++++++++========%

\addcontentsline{toc}{section}{References} \label{bib}

\end{document}